%2multibyte Version: 5.50.0.2960 CodePage: 1252
%\newtheorem{remark}[theorem]{Remark}
% \gamemathtrue
%\usepackage{array}
%\usepackage{tabularx}
%\usepackage[active]{srcltx}
%\usepackage[active]{srcltx}
%\usepackage{ae}
%\usepackage{aecompl}
%\usepackage{amssymb}
%\usepackage{xcolor}
%\usepackage{amsmath}
%\usepackage{amsfonts}
%\usepackage{graphicx}
%\usepackage{accents}
%\usepackage{natbib}
%\usepackage[colorlinks,linkcolor=links,citecolor=cites,urlcolor=MyDarkBlue]{hyperref}
%\usepackage{array}
%\usepackage{sgame}
%\usepackage{showkeys}
%\theoremstyle{plain}
%\usepackage{showkeys}
% \usepackage[pagebackref,colorlinks,urlcolor=black,citecolor=black,filecolor=black,
% linkcolor=black, bookmarksnumbered,bookmarksopen=true,bookmarksopenlevel=0]%
% {hyperref}
\documentclass[12pt,notitlepage]{article}%
\usepackage[utf8]{inputenc}
\usepackage{amsthm}
\usepackage{setspace}
\usepackage{eurosym}
\usepackage{subfigure}
\usepackage{xcolor}
\usepackage{amssymb}
\usepackage{graphicx}
\usepackage[colorlinks,linkcolor=links,citecolor=cites,urlcolor=MyDarkBlue]%
{hyperref}
\usepackage{amsfonts}
\usepackage{amsmath}
\usepackage{amstext}
\usepackage{appendix}
\usepackage{appendix}
\usepackage{mathpazo}
\usepackage{tikz}
\usetikzlibrary{matrix}
\usetikzlibrary{decorations.pathreplacing}
\usetikzlibrary{calc}
\usepackage{natbib}%

\setcounter{MaxMatrixCols}{30}
%TCIDATA{OutputFilter=latex2.dll}
%TCIDATA{Version=5.50.0.2960}
%TCIDATA{Codepage=1252}
%TCIDATA{CSTFile=article.cst}
%TCIDATA{Created=Monday, November 05, 2001 21:57:33}
%TCIDATA{LastRevised=Friday, January 13, 2017 16:42:28}
%TCIDATA{<META NAME="GraphicsSave" CONTENT="32">}
%TCIDATA{<META NAME="SaveForMode" CONTENT="1">}
%TCIDATA{BibliographyScheme=Manual}
%TCIDATA{<META NAME="DocumentShell" CONTENT="Articles\SW\AMS Journal Article">}
%TCIDATA{Language=American English}
%BeginMSIPreambleData
\providecommand{\U}[1]{\protect\rule{.1in}{.1in}}
%EndMSIPreambleData
\newtheorem{theorem}{Theorem}

\newtheorem{definition}{Definition}
\newtheorem{example}{Example}

\newtheorem{lemma}{Lemma}

\numberwithin{equation}{section}
\setlength{\topmargin}{0in}
\setlength{\textheight}{8.5in}
\setlength{\oddsidemargin}{0.1in}
\setlength{\evensidemargin}{0.1in}
\setlength{\textwidth}{6.5in}
\setlength{\headheight}{0in}

\definecolor{MyDarkBlue}{rgb}{0,0.08,0.45}
\definecolor{cites}{HTML}{324b13}
\definecolor{links}{HTML}{1a663b}
\definecolor{MyLightMagenta}{cmyk}{0.1,0.8,0,0.1}
\hypersetup{
colorlinks,citecolor=blue,filecolor=black,linkcolor=blue,urlcolor=blue
}
\linespread{1.5}
\parindent=0.8cm
\setlength{\parskip}{0.5ex}
\begin{document}

\title{Unidirectional substitutes and complements\thanks{I am grateful to Ning Sun, Qianfeng Tang, and Guoqiang Tian for their continuous support and encouragement. All errors are mine.}}
\author{Chao Huang\thanks{Institute for Social and Economic Research, Nanjing Audit University. Email: huangchao916@163.com.}}
\date{}
\maketitle

\begin{abstract}
In discrete matching markets, substitutes and complements can be unidirectional between two groups of workers when members of one group are more important or competent than those of the other group for firms. We show that a stable matching exists and can be found by a two-stage Deferred Acceptance mechanism when firms' preferences satisfy a unidirectional substitutes and complements condition. This result applies to both firm-worker matching and controlled school choice. Under the framework of matching with continuous monetary transfers and quasi-linear utilities, we show that substitutes and complements are bidirectional for a pair of workers.
\end{abstract}

\textit{Keywords}: two-sided matching; stability; complementarity;  many-to-one matching; Deferred Acceptance mechanism

\textit{JEL classification}: C78, D47, D63

\section{Introduction}\label{Sec_Intro}

In a matching market, suppose workers are divided into two groups, where workers from one group are more important or competent than workers from the other group for firms. Then, firms' selections on the former group usually affect their selections on the latter group, but not vice versa. For instance, in many real-life sectors where firms match with skilled and unskilled workers, firms often let skilled workers help and guide unskilled workers. Thus, a skilled worker would possibly have an effect on some firms' decisions in hiring an unskilled worker. However, the availability of an unskilled worker probably does not affect a firm's selections on skilled workers. Consider the following preferences of firms $f_1, f_2$ over a skilled worker $s$ and an unskilled worker $u$.

\begin{equation}\label{exam_in1}
f_1: \{s,u\}\succ\{s\}\succ\emptyset \qquad\qquad\qquad f_2: \{s\}\succ\{u\}\succ\emptyset
\end{equation}

We notice that $s$ is a \textbf{complement} to $u$ for $f_1$ because $f_1$ does not want to hire $u$ when choosing from the set $\{u\}$ and would hire $u$ when the available set expands to $\{s,u\}$. However, $u$ exerts no complementary effect on $s$ for both firms, and thus $u$ is not a complement to $s$. Therefore, the complementarity in the above preference profile is unidirectional. We also notice that $s$ is a \textbf{substitute} to $u$ for $f_2$ because $f_2$ wants to hire $u$ when choosing from the set $\{u\}$ and would not hire $u$ when the available set expands to $\{s,u\}$. Nonetheless, $u$ exerts no substitutable effect on $s$ for both firms, and thus $u$ is not a (better) substitute to $s$. Therefore, the substitutability in the above preference profile is also unidirectional.\footnote{See Definition \ref{sc} for the formal definition of complements and substitutes in discrete matching markets.}

In another matching market where firms want to hire some managers and some assistants, the selections on managers are often more important than those on assistants for firms. The availability of an assistant probably does not affect a firm's decision on whether to hire a certain manager. However, the availability of a certain manager would possibly affect its selection on assistants. For instance, when a firm $f$ have the following preference in (\ref{exam_in2}) over two managers $m_1,m_2$, and two assistants $a_1,a_2$, substitutes and complements are also unidirectional: $m_1$ is a complement to $a_1$;\footnote{$m_1$ is a complement to $a_1$ since $a_1$ is not chosen by $f$ from $\{a_1\}$ but chosen by $f$ from $\{a_1\}\cup\{m_1\}$. For a similar reason, $m_2$ is a complement to $a_2$.} $m_2$ is a complement to $a_2$; $m_1$ is a substitute to $a_2$.\footnote{$m_1$ is a substitute to $a_2$ since $a_2$ is chosen by $f$ from $\{m_2,a_2\}$ while not chosen by $f$ from $\{m_2,a_2\}\cup\{m_1\}$.} There are no other substitute or complement between managers and assistants.

\begin{equation}\label{exam_in2}
f: \{m_1,a_1\}\succ\{m_1\}\succ\{m_2,a_2\}\succ\{m_2\}\succ\emptyset
\end{equation}

To my best knowledge, unidirectional substitutes or unidirectional complements have not been studied in matching theory. The substitutes condition (\citealp{RS90}), which is sufficient for the existence of a stable matching, rules out all complements, including the unidirectional ones. The complements condition rules out all substitutes, including the unidirectional substitutes.\footnote{The complements condition alone is not appealing in matching theory, since a stable matching may not exist when there are complementary preferences. The condition of \cite{O08} contains a cross-side complementarity assumption, which rules out all substitutes that cross the sides of a supply chain network, see Section \ref{Sec_USC}.} Previous theoretical and applied studies in the discrete matching market mostly rely on the substitutes condition. Positive results for the existence of a stable matching with complements are rather scarce.

This paper studies matching problems where workers are divided into two groups, where workers in one group are more important or competent than those in the other group for firms. We show that a stable matching always exists when firms' preferences satisfy a \textbf{unidirectional substitutes and complements} (henceforth USC) condition: For each firm, (i) there are no complements within the same group, and (ii) workers of one group are neither substitutes nor complements to workers of the other group. Assumption (ii) means that we only allow workers of the latter group to be unidirectional substitutes or unidirectional complements to workers of the former group. Both (\ref{exam_in1}) and (\ref{exam_in2}) are instances of firms' preferences that satisfy the USC condition.\footnote{Workers in (\ref{exam_in1}) are divided into two groups where one group contains a skilled worker, and the other contains an unskilled worker. Thus, firms' preferences trivially satisfy the assumption (i). It is also easy to check that there are no complements within the same group in (\ref{exam_in2}).} The USC condition subsumes the substitutes condition and allows for a new class of preferences that satisfy neither the same-side substitutability and cross-side complementarity (henceforth SSCC) condition of \cite{O08} nor the unimodularity condition of \cite{H21}, see Section \ref{Sec_USC}.\footnote{The USC condition reduces to the substitutes condition when one of the two groups is empty.}

Our proof of the existence theorem is constructive and intuitive. Under the USC condition, the Deferred Acceptance (henceforth DA) mechanism does not necessarily produce a stable matching due to the presence of complementarities. However, we can always find a stable matching by running a two-stage DA mechanism. We proceed with the problem of matching firms with skilled and unskilled workers. We let skilled workers propose to firms in the first stage and let unskilled workers propose to firms in the second stage. Because unskilled workers are not substitutes to skilled workers, the second stage does not change the allocation of skilled workers determined by the first stage. We then find that the complementarities that skilled workers exert on unskilled workers do not distort stability in the second stage.

We can also apply this result to controlled school choice with a particular type of proportionality constraints, which are very common in China. Cities in China are divided into school districts, where each district contains several high schools and primary schools. Schools are allowed to admit students of other districts but should prioritize students within their districts over cross-district students. Families should pay a sum of money called ``school selection fees'' to a school outside their districts once their children get permitted to attend the school. Admitting cross-district students is profitable for schools, but the government imposes ceilings on the proportionalities of cross-district students for schools. We show that schools' preferences satisfy the USC condition in this problem. Therefore, there always exists a stable matching that can be found by running the two-stage DA mechanism.

The two-sided many-to-one matching model in this paper can be extended to more general settings in different directions, see Section \ref{Sec_Concl}. We leave for future research on economic applications under those extended frameworks. In this paper, we would like to present our results in the basic setting of the two-sided many-to-one matching for the interest of brevity.

Discrete matching problems assume that there are no monetary transfers (school choice and college admission), or workers' wages are exogenously given (firm-worker matching and hospital-doctor matching). However, some studies in matching theory assume there are continuous monetary transfers between firms and workers where firms have quasi-linear utilities, and workers' wages are determined endogenously. An interesting phenomenon is that there are parallel existence results between the two problems: the gross substitutes condition of \cite{KC82} and the substitutes condition of \cite{RS90}, the gross substitutes and complements (henceforth, GSC) condition of \cite{SY06,SY09} and the SSCC condition of \cite{O08},\footnote{See also the full substitutability condition of \cite{HKNOW13}, which generalizes the GSC condition of \cite{SY06,SY09} into the framework of trading networks.} and the unimodularity conditions of \cite{DKM01} (recovered independently by \citealp{BK19}) and \cite{H21}. We find that the USC condition is an exception. For convenience, we call the matching market with continuous monetary transfers and quasi-linear utilities the quasi-linear market. Consider the quasi-linear market with a skilled worker $s$ and an unskilled worker $u$. Suppose a firm has the following valuation $v$ on the workers.

\begin{equation}\label{exam_in3}
v(\{s,u\})=x \qquad\quad v(\{s\})=5 \qquad\quad v(\{u\})=3 \qquad\quad v(\emptyset)=0
\end{equation}

The counterpart to the USC condition would require that the  salary for $u$ does not affect the firm's demand for $s$. However, this would require $x=8$,\footnote{For example,(i) if $x>8$, suppose $x=10$ and the firm has to pay 6 to $s$, then the firm would hire both workers when the salary for $u$ is 3, but would not hire any worker when the salary for $u$ is 5; (ii) if $x<8$, suppose $x=6$ and the firm has to pay 4 to $s$, then the firm would hire $s$ when the salary for $u$ is 3, but would hire $u$ when the salary for $u$ is 1.} with which the salary for $s$ does not affect the firm's demand for $u$ either.\footnote{When $x=8$, the firm's demand for $u$ only depends on $u$'s salary.} This is in contrast to (\ref{exam_in1}) where $s$ would affect the firms' demand for $u$, while $u$ does not affect the firms' demand for $s$.

To study the difference in the structures of substitutabilities and complementarities between the two markets, we introduce notions of substitutes and complements, which are natural counterparts to those in discrete matching, for a pair of workers in the quasi-linear market. We say that worker $w$ is a substitute (complement) to worker $w'$ for some firm if a lower salary for $w$ may decrease (resp. increase) the demand of $w'$ for this firm.\footnote{See Definition \ref{Def_scm} for the formal definition of substitutes and complements in the quasi-linear market.} We then find that the substitutes and complements must be bidirectional for a pair of workers in the quasi-linear market: If worker $w$ is a substitute (complement) to worker $w'$ for some firm, then $w'$ is also a substitute (resp. complement) to $w$ for this firm. This result shows that the structures of substitutabilities and complementarities are different between the discrete matching market and the quasi-linear market.

\subsection{Related literature}\label{lite}

The seminal work of \cite{GS62} proposed the problem of the two-sided matching and the DA mechanism in marriage markets and college admissions. \cite{KC82} studied the firm-worker matching in a quasi-linear market and found that a DA-like auction finds a stable matching when firms' valuations satisfy the gross substitutes condition. \cite{R84,R85} and \cite{RS90} developed the discrete matching into the job markets and found that the DA mechanism produces a stable matching when agents' preferences satisfy the substitutes condition.

The GSC condition of \cite{SY06,SY09} generalizes the gross substitutes condition in the quasi-linear market. The SSCC condition of \cite{O08} generalizes the substitutes condition in the discrete matching market and into a more general framework of supply chain networks. Our USC condition shares a similar structure with the GSC condition and the SSCC condition. Under the framework of the two-sided matching, the GSC condition and the SSCC condition both require that workers are divided into two groups where workers within the same group are not complements for firms, and workers of different groups are not substitutes for firms. The USC condition replaces the no cross-group substitutes assumption of the SSCC condition by the requirement that workers of one group do not affect firms' choices on workers of the other group. The GSC condition and the SSCC condition apply to the cases where workers of two groups are complementary for firms, while the USC condition applies to the cases where workers of one group are more important or competent than those of the other group for firms. We can extend our model to a structure where workers are divided into more than two groups, and further to many-to-many matching or network structure as that of \cite{O08}, see Section \ref{Sec_Concl}. A recent work of the author (\citealp{H21}) proposed a unimodularity condition, which is independent of the substitutes condition. The USC condition allows firms' preferences to satisfy neither the SSCC condition nor the unimodularity condition.

Other than firm-worker matching, we have also applied our result to controlled school choice. In school-choice programs, schools often concern affirmative action, ethnic balance, or socioeconomic factors on the distribution of students. Complementarities arise in the program when there are type-specific floors, diversity considerations, or proportionality constraints. \cite{EHYY14} and \cite{NV19} respectively treated type-specific floors and proportionality constraints as soft to preserve stability. The proportionality constraint studied in our paper is a special case of \cite{NV19}, where schools' preferences satisfy the USC condition. \cite{EY15} treated diversity considerations by designing schools' choice rules to both fit the substitutes condition and achieve diversity. \cite{HKY19} studied an interdistrict school-choice program which allocates students by a mechanism over districts, based on the current interdistrict program in America. We consider the placements of within-district and cross-district students in a single district, because the school-choice programs of different districts are independent with each other in China.

Complementary preferences have been an important issue in two-sided matching during the past decade. Thus, this paper also relates to an extensive ongoing literature on complementarities in matching markers, which includes matching with couples (see e.g., \citealp{KK05}, \citealp{KPR13}, \citealp{ABH14}, and \citealp{NV18}), matching with peer effects (see e.g., \citealp{EY07}, \citealp{P12}), and matching in large markets (see e.g., \citealp{AWW13}, \citealp{AH18}, and \citealp{CKK19}), among others.

The remainder of this paper is organized as follows. Section \ref{Sec_M} introduces the discrete matching market and the USC condition. Section \ref{Sec_Mech} presents the two-stage DA mechanism and our result. Section \ref{Sec_App} applies our result to controlled school choice. Section \ref{Sec_Money} shows that substitutes and complements are bidirectional for a pair of workers in the quasi-linear market. Section \ref{Sec_Concl} concludes. Proofs are relegated into the Appendix.

\section{Model\label{Sec_M}}

\subsection{Discrete matching market \label{Sec_MM}}

There is a set $F=\{f_1,\ldots,f_m\}$ of $m$ firms and a set $W$ of $n$ workers. Let ${\o}$ be the null firm, representing not being matched with any firm. Each worker $w\in W$ has a strict, transitive and complete preference $P_w$ over $\widetilde{F}:=F\cup\{{\o}\}$. For any $f, f'\in \widetilde{F}$, we write $f\succ_w f'$ when $w$ prefers $f$ to $f'$ according to $P_w$. We write $f\succeq_w f'$ if either $f\succ_w f'$ or $f=f'$. Let $P_W$ denote the preference profile of all workers. Each firm $f\in F$ has a strict, transitive and complete preference $P_f$ over $2^W$. For any $X,X'\subseteq W$, we write write $X\succ_f X'$ when $f$ prefers $X$ to $X'$ according to $P_f$. We write $X\succeq_f X'$ if either $X\succ_f X'$ or $X=X'$. Let $P_F$ be the preference profile of all firms. A matching market can be summarized as a tuple $\Gamma=(F,W,P_F,P_W)$.

\begin{definition}\label{Def_matching}
\normalfont
A \textbf{matching} $\eta$ is a function from the set $\widetilde{F}\cup W$ into $\widetilde{F}\cup 2^W$ such that for all $f\in \widetilde{F}$ and $w\in W$,
\begin{description}
\item[1.] $\eta(w)\in \widetilde{F}$;

\item[2.] $\eta(f)\in 2^W$;

\item[3.] $\eta(w)=f$ if and only if $w\in\eta(f)$.
\end{description}
\end{definition}

Let $Ch_f$ be the choice function of $f$ such that for any $X\subseteq W$, $Ch_f(X)\subseteq X$ and $Ch_f(X)\succeq_f X'$ for any $X'\subseteq X$. By convention, let $Ch_{{\o}}(X)=X$ for all $X\subseteq W$. We say a matching $\eta$ is \textbf{individual rational} if $\eta(w)\succeq_w {\o}$ for all $w\in W$ and $\eta(f)=Ch_f(\eta(f))$ for all $f\in F$. We say a firm $f$ and a subset of workers $X\subseteq W$ form a \textbf{blocking coalition} that blocks $\eta$ if $f\succeq_w\eta(w)$ for all $w\in X$, and $X\succ_f\eta(f)$. In words, individual rationality requires that each matched worker prefer her current employer to being unmatched and that no firm wish to unilaterally drop any of its employees. $f$ and $X$ blocks $\eta$ if $f$ strictly prefers $X$ to its current set of employees $\eta(f)$, and each worker $w$ in $X$ weakly prefers $f$ to her current employer $\eta(w)$.

\begin{definition}\label{stability}
\normalfont
A matching $\eta$ is \textbf{stable} if it is individual rational and there is no blocking coalition that blocks $\eta$.\footnote{The no blocking coalition condition implies the individual rationalities of firms: $\eta(f)=Ch_f(\eta(f))$ for all $f\in F$. The exposition of this definition follows the convention in the literature.}
\end{definition}

We now introduce the following concepts of substitutes and complements.

\begin{definition}\label{sc}
\normalfont
$w$ is a \textbf{substitute} to $w'$ for firm $f$, if there exists $X\subseteq W$ such that $\{w,w'\}\subseteq X$, $w'\in Ch_f(X\setminus \{w\})$, and $w'\notin Ch_f(X)$. $w$ is a \textbf{complement} to $w'$ for firm $f$, if there exists $X\subseteq W$ such that $\{w,w'\}\subseteq X$, $w'\notin Ch_f(X\setminus \{w\})$, and $w'\in Ch_f(X)$.
\end{definition}

In words, $w$ is a substitute to $w'$ if the firm wants to drop $w'$ when $w$ becomes available in some case. $w$ is a complement to $w'$ if the firm switches to employ $w'$ when $w$ becomes available in some case. It is possible that a worker is both a substitute and a complement to another worker for the same firm. For instance, consider the following preference for firm $f$ on three workers $w_1, w_2,$ and $w_3$.
\begin{equation*}
\{w_1,w_3\}\succ \{w_2,w_3\}\succ \{w_1,w_2\}\succ \{w_1\}\succ \emptyset
\end{equation*}
$w_1$ is a substitute to $w_2$ for the firm because $w_2\in Ch_f(\{w_2,w_3\})$ and $w_2\notin Ch_f(\{w_2,w_3\}\cup\{w_1\})$. $w_1$ is also a complement to $w_2$ for the firm because $w_2\notin Ch_f(\{w_2\})$ and $w_2\in Ch_f(\{w_2\}\cup\{w_1\})$.

There always exists a stable matching when firms' preferences satisfy the substitutes condition of \cite{RS90}, the definition of which can be given with our new terminology. Firms' preferences satisfy the substitutes condition if there are no complements.\footnote{The original definition for substitutes condition is as follows: Firm $f$'s preference satisfy the substitutes condition if for any $X\subseteq W$ and $w,w'\in W$ with $\{w,w'\}\subseteq X$, $w'\in Ch_f(X)$ implies $w'\in Ch_f(X\setminus\{w\})$. This definition is equivalent to Definition \ref{Def_sub}.\label{foot}}

\begin{definition}\label{Def_sub}
\normalfont
Firm $f$'s preference satisfies the \textbf{substitutes} condition if for any $w,w'\in W$, $w$ is not a complement to $w'$ for $f$.
\end{definition}

\subsection{Unidirectional substitutes and complements\label{Sec_USC}}

Suppose the worker set $W$ is partitioned into two subsets $S$ and $U$. We present our model in the context of matching firms with skilled and unskilled workers, and thus we call elements from $S$ skilled workers and elements from $U$ unskilled workers. The pattern of division is shared by all firms, and thus a worker considered as unskilled by one firm cannot be considered as skilled by another firm. Now we introduce the USC condition.

\begin{definition}\label{Def_usc}
\normalfont
A firm's preference satisfies the \textbf{unidirectional substitutes and complements} condition if for this firm,

\begin{description}
\item[(i)]for any $s,s'\in S$, $s$ is not a complement to $s'$; for any $u,u'\in U$, $u$ is not a complement to $u'$;

\item[(ii)]for any $s\in S$ and $u\in U$, $u$ is neither a substitute nor a complement to $s$.
\end{description}
\end{definition}

The assumption (i) of the USC condition says that we assume no complements within either group. The assumption (ii) of the USC condition means that we require the substitutes and complements to be unidirectional between the two groups. To see the role of the assumption (ii), we provide two simple examples below, both of which violate this assumption. Recall firms' preferences of (\ref{exam_in1}), which satisfy the USC condition. It is not difficult to check that a stable matching exists in this market for all possible preferences of workers. We modify (\ref{exam_in1}) into the following two preference profiles.

\begin{equation}\label{exam_usc1}
\begin{aligned}
&f_1: \{s,u\}\succ\emptyset \qquad\qquad\qquad\qquad &s: &\quad f_1\succ f_2\\
&f_2: \{s\}\succ\{u\}\succ\emptyset \qquad\qquad\qquad\qquad &u: &\quad f_2\succ f_1
\end{aligned}
\end{equation}

$f_1$'s preference in (\ref{exam_usc1}) violates the assumption (ii) of the USC condition, since $s$ and $u$ are mutual complements for $f_1$. No stable matching exists in this market.\footnote{In this market, $f_1$ should hire both workers or neither of them in any stable matching. In the former case, $f_2$ would form a blocking coalition with $u$, who prefers $f_2$ to $f_1$. In the latter case, $f_2$ would be matched with $s$, leaving $u$ unmatched. Then $f_1$ would form a blocking coalition with both $s$ and $u$.} Consider another market as follows.

\begin{equation}\label{exam_usc2}
\begin{aligned}
&f_1: \{s,u\}\succ\{s\}\succ\emptyset \qquad\qquad\qquad\qquad &s: &\quad f_2\succ f_1\\
&f_2: \{u\}\succ\{s\}\succ\emptyset \qquad\qquad\qquad\qquad &u: &\quad f_1\succ f_2
\end{aligned}
\end{equation}

Firms' preferences in (\ref{exam_usc2}) violate the assumption (ii) of  the USC condition, since $s$ is a complement to $u$ for $f_1$, while $u$ is a substitute to $s$ for $f_2$. No stable matching exists in this market either.\footnote{If $f_1$ employs both $s$ and $u$, then $f_2$ would form a blocking coalition with $s$. If $f_1$ employs $s$, and $f_2$ employs $u$, then $f_1$ would form a blocking coalition with both $s$ and $u$. If $f_1$ is matched with the empty set, $f_2$ would employ $u$, leaving $s$ unmatched. Then $f_1$ would also form a blocking coalition with both $s$ and $u$.}

The structure of the USC condition resembles that of the SSCC condition. \cite{O08} defined the SSCC condition in a supply chain network. We compare the two conditions in the two-sided matching market, where the SSCC condition is described as follows.

\begin{definition}
\normalfont
A firm's preference satisfies the \textbf{same-side substitutability and cross-side complementarity} condition if for this firm,

\begin{description}
\item[(i)]Same-side substitutability: for any $s,s'\in S$, $s$ is not a complement to $s'$; for any $u,u'\in U$, $u$ is not a complement to $u'$;

\item[(ii)]Cross-side complementarity: for any $s\in S$ and $u\in U$, $u$ is not a substitute to $s$, $s$ is not a substitute to $u$.
\end{description}
\end{definition}

The assumption (i) of the USC condition is the same as the assumption (i) of the SSCC condition. The two conditions differ in the assumption (ii). The SSCC condition rules out all substitutes between the two groups, while the USC condition rules out all substitutabilities and complementarities that unskilled workers exert on skilled workers. Both firms' preference profiles of (\ref{exam_in1}) and (\ref{exam_in2}) satisfy the USC condition but violate the assumption (ii) of the SSCC condition: $s$ is a substitute to $u$ for $f_2$ in (\ref{exam_in1}), and $m_1$ is a substitute to $a_2$ for $f$ in (\ref{exam_in2}).

We then introduce the unimodularity condition informally. The demand type $\mathcal{D}_f$ for firm $f$ consists of indicators for changes of $f$'s choices as its available set of workers expands. In the example of (\ref{exam_in1}), $f_1$ is of demand type $\{(0,1),(1,0),(1,1)\}$. $(0,1)$ is in $f_1$'s demand type because when the available set for $f_1$ expands from $\{s\}$ to $\{s,u\}$, $f_1$'s choice changes from $\{s\}$ to $\{s,u\}$, then $(0,1)$ is the indicator for $\{s,u\}$ minus $\{s\}$. $(1,0)$ and $(1,1)$ are due to the cases when $f$'s available set expands from $\emptyset$ to $\{s\}$ and from $\emptyset$ to $\{s,u\}$, respectively. Likewise, $f_2$ is of demand type $\{(0,1),(1,0),(1,-1)\}$. Note that $(1,-1)$ is in $f_2$'s demand type because when the available set for $f_2$ expands from $\{u\}$ to $\{s,u\}$, $f_2$'s choice changes from $\{u\}$ to $\{s\}$, then $(1,-1)$ is the indicator for $\{s\}$ minus $\{u\}$. The demand type of firms' preference profile is the union of all firms' demand types: firms' preference profile in (\ref{exam_in1}) is of demand type $\{(0,1),(1,0),(1,1),(1,-1)\}$, which is not unimodular since the determinant of $(1,1)$ and $(1,-1)$ is $-2$. Therefore, (\ref{exam_in1}) is an instance of firms' preferences that satisfy the USC condition but violate both the unimodularity condition and the SSCC condition. We refer the reader to \cite{H21} for the formal definition and more details for the unimodularity condition.

\section{Mechanism\label{Sec_Mech}}

When firms' preferences satisfy the substitutes condition, we can obtain a stable matching by running the worker-proposing DA mechanism, which operates as follows:
\medskip
\begin{description}
\item[Step $1$] Each worker proposes to her most preferred firm. Each firm tentatively accepts its most preferred set of workers from the proposals it receives and rejects the rest.

\item[Step $k,k\geq2$] Each worker rejected in the previous round proposes to her next best firm. Each firm that faces new applicants tentatively accepts its most preferred set of workers and rejects the rest, among both new applicants and previously accepted workers.

The algorithm stops when there are no rejections.
\end{description}
\medskip

Under the USC condition, the worker-proposing DA mechanism does not necessarily produce a stable outcome, see Example \ref{exam_illu} below. However, we can always find a stable matching by running a two-stage DA mechanism. In the \textbf{two-stage worker-proposing DA mechanism}, skilled workers propose to firms in the first stage, and unskilled workers propose to firms in the second stage. Formally, in the first stage,

\medskip
\begin{description}
\item[Step $1.1$] Each skilled worker proposes to her most preferred firm. Each firm tentatively accepts its most preferred set of skilled workers from the proposals it receives and rejects the rest.

\item[Step $1.k,k\geq2$] Each skilled worker rejected in the previous round proposes to her next best firm. Each firm that faces new applicants tentatively accepts its most preferred set of skilled workers and rejects the rest, among both new applicants and previously accepted workers.

The first stage terminates when there are no rejections.
\end{description}
\medskip
In the second stage,
\medskip
\begin{description}
\item[Step $2.1$] Each unskilled worker proposes to her most preferred firm. Each firm that faces new applicants tentatively accepts its most preferred set of workers and rejects the rest, among both new applicants and the skilled workers allocated to the firm by the first stage.

\item[Step $2.k,k\geq2$] Each worker rejected in the previous round proposes to her next best firm. Each firm that faces new applicants tentatively accepts its most preferred set of workers and rejects the rest, among both new applicants and previously accepted workers.

The second stage terminates when there are no rejections, and the firms finalize their matches to the tentatively accepted workers.
\end{description}

Because unskilled workers are not substitutes to skilled workers, in each step of the second stage only unskilled workers are rejected. The second stage does not change the allocation of skilled workers determined by the first stage. We then know that whenever an unskilled worker is rejected by a firm at any step of the second stage, this unskilled worker would not be reconsidered as desirable for the firm after this step. Therefore, although skilled workers may be complements to unskilled workers, these complementarities do not distort stability during the two stages.

\begin{theorem}\label{thm_main}
\normalfont
When each firm's preference satisfies the USC condition, the two-stage worker-proposing DA mechanism produces a stable matching.
\end{theorem}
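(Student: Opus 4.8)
The plan is to show two things: first, that the algorithm terminates and that the second stage does not undo the first stage's allocation of skilled workers; and second, that the resulting matching $\eta$ is individually rational and admits no blocking coalition. For termination, I would note that each stage is just a standard Gale--Shapley deferred acceptance run (restricted in the first stage to $S$, and in the second stage treating the first-stage skilled allocation as a fixed "endowment" that firms always retain the option of keeping), so the usual monotonicity argument (workers move down their preference lists, firms' tentative holdings improve) gives finiteness. The key structural observation, which I would isolate as a lemma, is that because no $u \in U$ is a substitute to any $s \in S$ (assumption (ii) of USC), in the second stage a firm never drops a skilled worker it is currently holding: when a firm re-optimizes over its current skilled set $S_f$ plus a pool of unskilled applicants, every skilled worker in $S_f$ remains chosen. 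Hence at the end of the algorithm $\eta(f) \cap S = S_f$ equals exactly the first-stage allocation, and within the second stage only unskilled workers are ever rejected.

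Next I would establish individual rationality. For workers: each worker only ever proposes to firms she prefers to $\o$, so $\eta(w) \succeq_w \o$. For firms: I need $Ch_f(\eta(f)) = \eta(f)$. At the moment the algorithm halts, $\eta(f)$ is exactly the set $f$ tentatively holds, which by construction is $Ch_f$ of (new applicants $\cup$ previously held), so in particular $\eta(f) = Ch_f(\eta(f))$ — this is immediate from how each step selects the chosen set.

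The heart of the proof is ruling out blocking coalitions. Suppose $(f, X)$ blocks $\eta$, so $X \succ_f \eta(f)$ and every $w \in X$ weakly prefers $f$ to $\eta(w)$. I would split $X = X_S \cup X_U$ with $X_S = X \cap S$, $X_U = X \cap U$, and argue each skilled worker in $X_S$ must already be in $\eta(f)$: a skilled worker $s \in X_S$ with $s \notin \eta(f)$ would, during the first stage, have proposed to $f$ (since she weakly prefers $f$ to her final match $\eta(s)$, which is also her first-stage match) and been rejected; I then want to use the standard DA argument that a firm that once rejected $s$ never again holds a set it would trade for one containing $s$ — but here I must be careful because the relevant comparison set includes unskilled workers. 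I would handle this by working group-by-group: first show, using only the first stage and assumption (i) restricted to $S$ (no complements among skilled workers) plus the standard DA no-blocking argument applied to the skilled-only run, that $f$ cannot profitably add any skilled worker beyond $\eta(f) \cap S$; then, holding the skilled part fixed at $\eta(f)\cap S$, show via the second-stage DA run — again using assumption (i) restricted to $U$ and the fact that rejected unskilled workers stay rejected — that $f$ cannot profitably rearrange its unskilled workers. The complements that skilled workers exert on unskilled workers are exactly what makes this second step subtle: an unskilled worker $u$ rejected early in the second stage might look desirable later once more skilled workers... except that by the lemma the skilled set only shrinks as rejections happen within a firm — wait, it is fixed — so I would argue that the pool of skilled workers at $f$ is constant throughout the second stage, hence the "availability" of skilled complements never changes during stage two, and therefore the ordinary DA stability argument for the unskilled workers goes through verbatim with the skilled set treated as a fixed background.

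The main obstacle, then, is making the last argument fully rigorous: precisely, showing that a blocking pair $(f,X)$ with $X \cap S = \eta(f)\cap S$ cannot exist by reducing it to non-blocking in the second-stage DA, where the reduction requires knowing that (a) the skilled allocation at $f$ was genuinely fixed before stage two started and stayed fixed, and (b) every unskilled worker in $X_U$ who is not in $\eta(f)$ was in fact rejected by $f$ at some point in stage two (which uses that she weakly prefers $f$ to $\eta(\cdot)$ and that stage two is a complete DA run over all unskilled workers). Once those two facts are pinned down, the familiar Gale--Shapley argument — a firm's tentative holdings only improve, and a worker rejected by $f$ is rejected because $f$ strictly prefers what it kept — closes the case. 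I would also double-check the boundary behavior: the first stage might leave a skilled worker unmatched who is then never reconsidered, which is fine since she proposed to every firm she prefers to $\o$ during stage one.
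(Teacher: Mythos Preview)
Your proposal shares the paper's overall structure---first establish that stage~2 leaves the skilled allocation $\theta(f)$ untouched (via USC(ii)), then argue no blocking coalition exists---but there is a genuine gap in the second step.

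You restrict, in your final paragraph, to blocking coalitions $(f,X)$ with $X\cap S=\eta(f)\cap S$ and handle those via the stage-2 DA argument. But you never justify this restriction. Your earlier claim that ``each skilled worker in $X_S$ must already be in $\eta(f)$'' is not established by the sketch you give: the standard DA argument yields $s\notin Ch_f(A_f^S(\theta))$ for a rejected skilled $s$, but that does not say an \emph{arbitrary} set $X\ni s$ with $X\subseteq A_f(\eta)$ must satisfy $\eta(f)\succeq_f X$. And even granting $X_S\subseteq\eta(f)\cap S$, you still need the case $X_S\subsetneq\eta(f)\cap S$, which you do not treat.

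The deeper issue you gloss over is that $A_f(\eta)$ contains every skilled worker rejected in stage~1, and USC permits such a rejected $s$ to be a \emph{complement} to an unskilled $u$. So an unskilled $u$ rejected in stage~2 (from a pool whose only skilled members are $\theta(f)$) could in principle become desirable once a rejected skilled worker is added back. Your sentence ``the pool of skilled workers at $f$ is constant throughout the second stage, hence the availability of skilled complements never changes'' speaks only to the algorithm's trajectory, not to the composition of a potential blocking set. The paper closes this gap by proving the single statement $Ch_f(A_f(\eta))=\eta(f)$, which kills all blocking coalitions at once since any block $X$ lies in $A_f(\eta)$. For each rejected unskilled $u$, the paper takes the stage-2 rejection pool $X\supseteq\theta(f)$ from which $u$ was dropped and adds the rejected skilled workers one at a time: since each such $s$ satisfies $s\notin Ch_f(X\cup\{s\})$ (by stage-1 stability together with the fact that no worker is a complement to $s$), one gets $Ch_f(X\cup\{s\})\subseteq X$, whence $u\notin Ch_f(X)$ forces $u\notin Ch_f(X\cup\{s\})$. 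Iterating over all rejected skilled workers and then over the remaining unskilled ones gives $u\notin Ch_f(A_f(\eta))$. This one-by-one enlargement argument is precisely the piece your proposal is missing.
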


\begin{example}\label{exam_illu}
\normalfont
Consider a matching market $\Gamma$ where there are three firms $f_1,f_2,f_3$, three skilled worker $s_1,s_2,s_3$, and two unskilled worker $u_1,u_2$. The preferences of firms and workers are as follows.
\begin{align*}
&f_1: \{s_2,u_2\}\succ\{s_2\}\succ\{s_1\}\succ\{s_3\}\succ\emptyset \quad &s_1: &\quad f_1\succ f_3\succ f_2\succ{\o}\\
&f_2: \{s_1\}\succ\{s_3,u_1\}\succ\{s_3\}\succ\emptyset \quad &s_2: &\quad f_3\succ f_1\succ{\o}\\
&f_3: \{s_1\}\succ\{s_2\}\succ\emptyset \quad &s_3: &\quad f_1\succ f_2\succ{\o}\\
&\quad &u_1: &\quad f_2\succ{\o}\\
&\quad &u_2: &\quad f_1\succ{\o}
\end{align*}

The firms' preferences above do not satisfy the SSCC condition but satisfy the USC condition.\footnote{The SSCC condition does not hold because $f_2$'s preference does not satisfy the cross-side complementarity assumption: $u_1\in Ch_{f_2}(\{s_3,u_1\})$ but $u_1\notin Ch_{f_2}(\{s_3,u_1\}\cup\{s_1\})$.} The worker-proposing DA mechanism operates as follows.

\bigskip
\begin{center}
\begin{tabular}
{cccc}%
$f_1$ & $f_2$ & $f_3$ & ${\o}$\\\hline
\fbox{$s_1$} $s_3$ $u_2$ & $u_1$ & \fbox{$s_2$}&\\
& $\fbox{$s_3$}$ & & \fbox{$u_1$} \fbox{$u_2$} \\\hline
\fbox{$s_1$}& $\fbox{$s_3$}$ & \fbox{$s_2$} & \fbox{$u_1$} \fbox{$u_2$}%
\end{tabular}
\end{center}
\bigskip

The matching produced above is not stable because $f_2$ and $\{s_3,u_1\}$ form a blocking coalition. The two-stage worker-proposing DA mechanism produces a stable matching $\eta_1$ as follows.\\

\begin{center}
Stage 1:\quad
\begin{tabular}
{cccc}%
$f_1$ & $f_2$ & $f_3$ & ${\o}$ \\\hline
\fbox{$s_1$} $s_3$ & &\fbox{$s_2$} & \\
& \fbox{$s_3$} & & \\\hline
\fbox{$s_1$} & \fbox{$s_3$} &\fbox{$s_2$} &
\end{tabular}
\qquad Stage 2:\quad
\begin{tabular}
{cccc}%
$f_1$ & $f_2$ & $f_3$ & ${\o}$ \\\hline
\fbox{$s_1$} $u_2$ & \fbox{$s_3$} \fbox{$u_1$} &\fbox{$s_2$} & \\
& & & \fbox{$u_2$} \\\hline
\fbox{$s_1$} & \fbox{$s_3$} \fbox{$u_1$} &\fbox{$s_2$} & \fbox{$u_2$}
\end{tabular}
\end{center}

\begin{equation*}
\eta_1=\left(
             \begin{aligned}
             f_1  \quad & \quad f_2 \quad & \quad f_3 \quad & \quad {\o} \\
             s_1 \quad&\quad s_3,u_1 \quad&\quad s_2 \quad&\quad u_2
             \end{aligned}
\right)
\end{equation*}

There are two stable matchings in this market, the other one is:

\begin{equation*}
\eta_2=\left(
             \begin{aligned}
             f_1  \quad & \quad f_2 \quad & \quad f_3 \quad & \quad {\o} \\
             s_2,u_2 \quad&\quad s_3,u_1 \quad&\quad s_1 \quad&\quad
             \end{aligned}
\right)
\end{equation*}

In marriage markets, college admissions, and firm-worker matching under substitutes condition, there are polarizations of interests between agents of opposite sides on the stable outcomes: Among all stable matchings, there exists a stable matching that is best for firms (men, colleges) and worst for workers (resp. women, students), as well as a stable matching worst for firms (men, colleges) and best for workers (resp. women, students) (see \citealp{GS62}, \citealp{R84}, and \citealp{RS90}). This polarization does not hold under the USC condition: In this example, $u_2$, $f_1$, and $f_3$ are better at $\eta_2$ than at $\eta_1$, while $s_1$ and $s_2$ are better at $\eta_1$ than at $\eta_2$.
\end{example}

When firms' preferences satisfy the substitutes condition, we can also find a stable matching by running the firm-proposing DA mechanism. We refer the reader to Chapter 6 of \cite{RS90} on the details of this mechanism. Under the USC condition, we can also apply the firm-proposing DA to the two-stage DA mechanism in the first, the second, or both stages. We can obtain a stable matching by each of the following three types of two-stage DA mechanisms: (i) firms propose to skilled workers in the first stage, and unskilled workers propose to firms in the second stage; (ii) skilled workers propose to firms in the first stage, and firms propose to unskilled workers in the second stage; (iii) firms propose to skilled workers in the first stage, and firms propose to unskilled workers in the second stage.\footnote{When we let firms propose to unskilled workers in the second stage, we should adjust the procedure accordingly. For instance, in the first step of the second stage each firm $f\in F$ should propose to its most preferred set of workers that includes all of those skilled workers allocated to $f$ by the first stage, but does not include any other skilled worker.} We omit the details and proofs for these variant mechanisms.

\section{Application\label{Sec_App}}

This section applies our result to controlled school choice with a particular type of proportionality constraints, which are very common in China. Cities in China are often divided into several school districts, each of which contains several middle schools and primary schools. Families are supposed to send their children to schools inside the family's district. However, some families are not satisfied with schools inside their districts and want to send their children to better schools outside their districts. The government allows schools to admit cross-district students but requires each school to give priority to students inside its district over those outside its district. Priorities within the same group (within-district students or cross-district students) are determined by a combination of geographic distances, school performances, and other social factors. Families outside the district of a school should pay  ``school selection fees'' to the school once their children get permitted to attend the school. Thus it is profitable for schools to admit cross-district students. However, local governments often impose ceilings on the proportionalities of cross-district students for schools. In this problem, we show that schools' preferences satisfy the USC condition: Although students in the school's district can be substitutes or complements to cross-district students, the latter are neither substitutes nor complements to the former. Therefore, there always exists a stable matching that can be found by running the two-stage DA mechanism.

For convenience, we elaborate with the notations of Section \ref{Sec_M}. Now $F$ is a set of schools that belongs to a school district; ${\o}$ is the null school, representing the outside options for students. $W$ is a set of students. Each school $f\in F$ has capacity $q_f$ and a strict priority order $\rhd_f$ on $W$. We write $w\rhd_f w'$ to indicate that $w$ has higher priority than $w'$ at $f$. $W$ is partitioned into $S$ and $U$ where $S$ is the set of students who live in the district, and $U$ the set of students from other districts. Each within-district student has higher priority than any of the cross-district students at each school: For each $s\in S$, $u\in U$, and $f\in F$, we have $s\rhd_f u$. For any $w\in W$, $f\in F$ and $X\subseteq W$, let $X^{\rhd_f w}\equiv\{w'\in X\mid w'\rhd_f w\}$ be the subset of $X$ that contains students who have higher priority than $w$ at $f$.

The government assigns each school $f\in F$ a proportionality ceiling $\alpha_f\in[0,1]$ on the cross-district students admitted by the school. Admitting cross-district students is prohibited for $f$ if $\alpha_f=0$ and free for $f$ if $\alpha_f=1$. Schools' preferences are described indirectly by their choice functions. For any $X\subseteq W$, the choice function $Ch_f(X)$ for each $f\in F$ is defined as follows: $Ch_f(X)\subseteq X$ and

\begin{align}
&\text{for each}\quad s\in X\cap S,\quad s\in Ch_f(X)\quad \text{if}\quad \mid X^{\rhd_f s} \mid<q_f;\label{sc1}\\
&\text{for each}\quad u\in X\cap U,\quad u\in Ch_f(X)\quad \text{if}\quad \mid X^{\rhd_f u} \mid<q_f \quad\text{and}\quad 1-\frac{\mid X\cap S\mid}{\mid X^{\rhd_f u} \mid+1}\leq\alpha_f.\label{sc2}
\end{align}

\begin{theorem}\label{thm_school}
\normalfont
In the problem of the controlled school choice with proportionality ceilings on cross-district students, schools' preferences satisfy the USC condition.
\end{theorem}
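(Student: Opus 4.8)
The plan is to verify the three requirements of Definition~\ref{Def_usc} one at a time, working directly from the explicit rule (\ref{sc1})--(\ref{sc2}), which I read as a complete specification of $Ch_f$: a within-district student $s\in X\cap S$ lies in $Ch_f(X)$ exactly when $|X^{\rhd_f s}|<q_f$, and a cross-district student $u\in X\cap U$ lies in $Ch_f(X)$ exactly when $|X^{\rhd_f u}|<q_f$ and $1-\frac{|X\cap S|}{|X^{\rhd_f u}|+1}\le\alpha_f$. The only structural feature of the market I will use is that every within-district student outranks every cross-district student, so that for $s\in S$ the set $X^{\rhd_f s}$ contains no element of $U$, while for $u\in U$ we have $X\cap S\subseteq X^{\rhd_f u}$. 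I will also use the routine reformulation of Definition~\ref{sc}: ``$w$ is not a complement to $w'$ for $f$'' means $w'\in Ch_f(X)\Rightarrow w'\in Ch_f(X\setminus\{w\})$ for every $X\supseteq\{w,w'\}$, and ``$w$ is not a substitute to $w'$'' is the converse implication.

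I would first dispatch assumption~(ii). Fix $s\in S$, $u\in U$ and $X\supseteq\{s,u\}$. Since $s\rhd_f u$, we have $u\notin X^{\rhd_f s}$, hence $(X\setminus\{u\})^{\rhd_f s}=X^{\rhd_f s}$, and by (\ref{sc1}) $s\in Ch_f(X)$ if and only if $s\in Ch_f(X\setminus\{u\})$. Deleting $u$ thus never flips the status of $s$, which simultaneously rules out $u$ being a substitute to $s$ and $u$ being a complement to $s$. Next, for the within-$S$ half of assumption~(i), fix distinct $s,s'\in S$ and $X\supseteq\{s,s'\}$ with $s'\in Ch_f(X)$, i.e. $|X^{\rhd_f s'}|<q_f$. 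Since $(X\setminus\{s\})^{\rhd_f s'}\subseteq X^{\rhd_f s'}$, the same strict inequality holds for $X\setminus\{s\}$, so $s'\in Ch_f(X\setminus\{s\})$ and $s$ is not a complement to $s'$.

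The remaining case, which I expect to be the main obstacle, is the within-$U$ half of assumption~(i): no $u\in U$ is a complement to any $u'\in U$. Fix distinct $u,u'\in U$ and $X\supseteq\{u,u'\}$ with $u'\in Ch_f(X)$, and set $Y:=X\setminus\{u\}$, so that $Y\cap S=X\cap S$. The capacity clause transfers immediately, since $|Y^{\rhd_f u'}|\le|X^{\rhd_f u'}|<q_f$. For the proportionality clause I would split on the relative priority of $u$ and $u'$. If $u'\rhd_f u$, then $Y^{\rhd_f u'}=X^{\rhd_f u'}$ and the inequality $1-\frac{|X\cap S|}{|X^{\rhd_f u'}|+1}\le\alpha_f$ holds verbatim for $Y$. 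If instead $u\rhd_f u'$, then $u\in X^{\rhd_f u'}$, so $|Y^{\rhd_f u'}|=|X^{\rhd_f u'}|-1\ge0$ and the condition $u'$ must satisfy in $Y$ reads $1-\frac{|X\cap S|}{|X^{\rhd_f u'}|}\le\alpha_f$; since $|X^{\rhd_f u'}|\ge1$ and $\frac{|X\cap S|}{|X^{\rhd_f u'}|}\ge\frac{|X\cap S|}{|X^{\rhd_f u'}|+1}$, this follows from the hypothesis $1-\frac{|X\cap S|}{|X^{\rhd_f u'}|+1}\le\alpha_f$. Either way $u'\in Ch_f(Y)$, so $u$ is not a complement to $u'$.

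Assembling the three pieces yields Definition~\ref{Def_usc}. The only genuine subtlety, concentrated in the third paragraph, is that deleting a higher-priority cross-district student lowers $u'$'s rank count by exactly one while leaving $|X\cap S|$ unchanged, so the cross-district proportion evaluated at $u'$ can only go down — which is precisely why the ceiling $\alpha_f$ continues to be met; everything else reduces to the one-line monotonicity observation that $X^{\rhd_f w}$ shrinks when a student is removed from $X$.
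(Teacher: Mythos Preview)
Your proof is correct and follows essentially the same approach as the paper: both verify each clause of Definition~\ref{Def_usc} by direct inspection of the choice rule (\ref{sc1})--(\ref{sc2}), using the priority of $S$ over $U$ for assumption~(ii) and the monotonicity of $X^{\rhd_f w}$ (together with the fact that deleting a $u\in U$ leaves $|X\cap S|$ unchanged) for assumption~(i). The only cosmetic differences are that you argue via the contrapositive and split the within-$U$ case on the relative priority of $u$ and $u'$, whereas the paper handles that case in one stroke via $|X^{\rhd_f u'}|\ge |(X\setminus\{u\})^{\rhd_f u'}|$.
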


\begin{example}
\normalfont
Suppose there are two schools $f_1,f_2$, two within-district students $s_1,s_2$, and two cross-district students $u_1,u_2$. The priorities for schools and the preferences for students are as follows.
\begin{align*}
&f_1:\quad s_1 \rhd s_2 \rhd u_1 \rhd u_2\quad \quad &s_1: &\quad f_2\succ f_1\succ {\o}\\
&f_2:\quad s_2 \rhd s_1 \rhd u_2 \rhd u_1\quad \quad &s_2: &\quad f_2\succ f_1\succ{\o}\\
&&u_1: &\quad f_1\succ f_2\succ{\o}\\
&&u_2: &\quad f_2\succ f_1\succ{\o}
\end{align*}
School $f_1$ has capacity $q_{f_1}=2$, and school $f_2$ has capacity $q_{f_1}=1$. The government imposes proportionality ceilings $\alpha_{f_1}=\frac{1}{2}$ and $\alpha_{f_2}=1$: The cross-district students allocated at $f_1$ should be no more than half of the number of students allocated at $f_1$, while there is no constraint for $f_2$. The student-proposing DA mechanism operates as follows.\\
\bigskip
\begin{center}
\begin{tabular}
{ccc}%
$f_1$ & $f_2$ & ${\o}$\\\hline
$u_1$ & $s_1$ \fbox{$s_2$} $u_2$\\
\fbox{$s_1$} \fbox{$u_2$}& \fbox{$s_2$} $u_1$ & \\
& & \fbox{$u_1$} \\\hline
\fbox{$s_1$} \fbox{$u_2$} & \fbox{$s_2$} & \fbox{$u_1$}%
\end{tabular}
\end{center}
\bigskip
In the first step, $u_1$ is rejected by $f_1$ due to the proportionality constraint. The matching produced above is not stable because $f_1$ and $\{s_1,u_1\}$ form a blocking coalition.
The two-stage student-proposing DA mechanism operates as follows. It produces a stable matching which allocates $\{s_1,u_1\}$ at $f_1$, $\{s_2\}$ at $f_2$, and $\{u_2\}$ at the null school ${\o}$.
\bigskip
\begin{center}
Stage 1:\quad
\begin{tabular}
{ccc}%
$f_1$ & $f_2$ & ${\o}$\\\hline
& $s_1$ \fbox{$s_2$}&\\
$\fbox{$s_1$}$& &\\\hline
\fbox{$s_1$} & \fbox{$s_2$}& %
\end{tabular}
\qquad Stage 2:\quad
\begin{tabular}
{ccc}%
$f_1$ & $f_2$ & ${\o}$\\\hline
$\fbox{$s_1$}$ \fbox{$u_1$}&\fbox{$s_2$} $u_2$ & \\
$\fbox{$s_1$}$ \fbox{$u_1$} $u_2$&  & \\
& & \fbox{$u_2$} \\\hline
$\fbox{$s_1$}$ \fbox{$u_1$}&\fbox{$s_2$}&\fbox{$u_2$} %
\end{tabular}
\end{center}
\bigskip
\end{example}

Current placements of cross-district students in China are in a decentralized manner that the families contact schools outside their districts on their own. A family may contact and get permitted by several schools of a district but finally choose only one of them. Our result suggests that the placements of cross-district students can be centralized with those of the within-district students in a two-stage DA mechanism, which could avoid such situations.

\section{Quasi-linear market\label{Sec_Money}}

In this section, we introduce notions of substitutes and complements, analogous to those in the discrete matching market, to the quasi-linear market of \cite{KC82}. In the quasi-linear market, firms have quasi-linear utilities, and there are continuous monetary transfers between firms and workers. We show that substitutes and complements are bidirectional for a pair of workers under this framework.

Let $W=\{w_1,\ldots,w_n\}$ be a set of $n$ workers. A firm has a valuation function $v: 2^W\rightarrow \mathbb{R}$, which specifies the value it gets from each possible subset of workers. Let $\textbf{p}\in \mathbb{R}^n$ be the salary vector where $p_i$ is the salary that the firm has to pay worker $w_i$ when the firm hires $w_i$.\footnote{We allow workers' salaries to be negative, and thus our model is equivalent to the setting where an agent has a quasi-linear utility on indivisible ``goods'' and ``bads''. The result in this section also holds if we assume the salaries of workers to be positive (i.e., $\textbf{p}\in \mathbb{R}_{++}^n$).} Given a salary vector $\textbf{p}\in \mathbb{R}^n$ and a subset of workers $X\subseteq W$, we also write as $p(w)$ the salary for any worker $w\in W$, and $c(X,\mathbf{p})=\sum_{w\in X}p(w)$ the cost of salaries for worker set $X$. The firm has to solve the following decision problem:
\begin{equation}\label{max}
\max_{X\subseteq W}\{v(X)-c(X,\mathbf{p})\}
\end{equation}
The set of solutions $D(\textbf{p})\equiv\arg\max_{X\subseteq W}\{v(X)-c(X,\mathbf{p})\}$ is called the demand correspondence. The setting of this section is equivalent to the one where an agent has a quasi-linear utility on indivisible objects, where the agent and the objects correspond to the firm and the workers, respectively.

Because firms may have multiple optimal choices in the quasi-linear market, there are no exact counterparts to the notions of substitutes and complements of the discrete matching market. We consider the following concepts of substitutes and complements for a pair of workers as natural counterparts to those in Definition \ref{sc}. We say that worker $w$ is \textbf{demanded} by the firm at salary vector $\mathbf{p}$ if there exists $X\subseteq W$ such that $w\in X$ and $X\in D(\mathbf{p})$.

\begin{definition}\label{Def_scm}
\normalfont
Worker $w$ is a \textbf{substitute} to worker $w'$ for the firm if there exists $\mathbf{p},\mathbf{p}'\in \mathbb{R}^n$ with $p(w)>p'(w)$ and $p(\widetilde{w})=p'(\widetilde{w})$ for all $\widetilde{w}\neq w$ such that $w'$ is demanded at $\mathbf{p}$ but not demanded at $\mathbf{p}'$. Worker $w$ is a \textbf{complement} to worker $w'$ for the firm if there exists $\mathbf{p},\mathbf{p}'\in \mathbb{R}^n$ with $p(w)>p'(w)$ and $p(\widetilde{w})=p'(\widetilde{w})$ for all $\widetilde{w}\neq w$ such that $w'$ is not demanded at $\mathbf{p}$ but demanded at $\mathbf{p}'$.
\end{definition}

In words, $w$ is a substitute to $w'$ if the firm wants to drop $w'$ when it costs a lower salary to hire $w$ in some case. $w$ is a complement to $w'$ if the firm switches to employ $w'$ when it costs a lower salary to hire $w$ in some case. We find that substitutes and complements for a pair of workers must be bidirectional in the quasi-linear market.

\begin{theorem}\label{thm_bidirection}
\normalfont
If $w$ is a substitute to $w'$ for the firm, then $w'$ is also a substitute to $w$ for the firm. If $w$ is a complement to $w'$ for the firm, then $w'$ is also a complement to $w$ for the firm.
\end{theorem}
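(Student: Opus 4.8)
The plan is to absorb every worker other than $w$ and $w'$ into four numbers and thereby reduce the statement to a two-worker computation. Fix the pair $w,w'$, let $W_0=W\setminus\{w,w'\}$, and record each salary vector $\mathbf p\in\mathbb R^{n}$ by its restriction $\mathbf q\in\mathbb R^{W_0}$ together with the two salaries $a:=p(w)$ and $b:=p(w')$. For a fixed $\mathbf q$ set
\[
G_{00}(\mathbf q)=\max_{Y\subseteq W_0}\bigl(v(Y)-c(Y,\mathbf q)\bigr),\qquad G_{10}(\mathbf q)=\max_{Y\subseteq W_0}\bigl(v(Y\cup\{w\})-c(Y,\mathbf q)\bigr),
\]
and let $G_{01}(\mathbf q),G_{11}(\mathbf q)$ be given by the same formula with $\{w'\}$ and $\{w,w'\}$ in place of $\{w\}$. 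Since $2^{W}$ is finite the maxima are attained; the firm's optimal profit at $(\mathbf q,a,b)$ equals $\max\{G_{00},G_{10}-a,G_{01}-b,G_{11}-a-b\}$ (arguments $\mathbf q$ suppressed), and the largest profit over subsets that contain $w'$ is $\max\{G_{01}-b,G_{11}-a-b\}$. Hence $w'$ is demanded at $(\mathbf q,a,b)$ if and only if $\max\{G_{01}-b,G_{11}-a-b\}\ge\max\{G_{00},G_{10}-a\}$, equivalently $b\le\nu_{\mathbf q}(a)$, where $\nu_{\mathbf q}(a):=\max\{G_{01},G_{11}-a\}-\max\{G_{00},G_{10}-a\}$.

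Next I would read off the shape of $\nu_{\mathbf q}$. As the difference of two non-increasing convex piecewise-linear functions of $a$ with single breakpoints at $a=G_{11}(\mathbf q)-G_{01}(\mathbf q)$ and $a=G_{10}(\mathbf q)-G_{00}(\mathbf q)$, it is itself piecewise linear with slopes in $\{-1,0,1\}$, equal to $G_{11}(\mathbf q)-G_{10}(\mathbf q)$ for all sufficiently small $a$ and to $G_{01}(\mathbf q)-G_{00}(\mathbf q)$ for all sufficiently large $a$, and --- crucially --- monotone; in fact it fails to be non-increasing exactly when $G_{10}(\mathbf q)+G_{01}(\mathbf q)>G_{00}(\mathbf q)+G_{11}(\mathbf q)$, and fails to be non-decreasing exactly when $G_{10}(\mathbf q)+G_{01}(\mathbf q)<G_{00}(\mathbf q)+G_{11}(\mathbf q)$. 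In Definition~\ref{Def_scm} the vectors $\mathbf p,\mathbf p'$ differ only in the $w$-coordinate, so they share a common pair $(\mathbf q,b)$; hence, writing $a'=p'(w)<p(w)=a$, the worker $w$ is a substitute to $w'$ iff for some $\mathbf q,b$ one can find $a>a'$ with $b\le\nu_{\mathbf q}(a)$ but $b>\nu_{\mathbf q}(a')$, and --- taking $b=\nu_{\mathbf q}(a)$ --- this happens iff $\nu_{\mathbf q}$ is not non-increasing for some $\mathbf q$, i.e. iff
\[
G_{10}(\mathbf q)+G_{01}(\mathbf q)>G_{00}(\mathbf q)+G_{11}(\mathbf q)\quad\text{for some }\mathbf q\in\mathbb R^{W_0}.
\]
The same argument shows that $w$ is a complement to $w'$ iff $G_{10}(\mathbf q)+G_{01}(\mathbf q)<G_{00}(\mathbf q)+G_{11}(\mathbf q)$ for some $\mathbf q$.

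The theorem is then a symmetry observation. Interchanging the labels $w$ and $w'$ interchanges $a$ with $b$ and $G_{10}$ with $G_{01}$ while leaving $G_{00}$ and $G_{11}$ fixed, so the criterion for ``$w'$ is a substitute to $w$'' is exactly the displayed inequality $G_{10}(\mathbf q)+G_{01}(\mathbf q)>G_{00}(\mathbf q)+G_{11}(\mathbf q)$, and the criterion for ``$w'$ is a complement to $w$'' is the reversed strict inequality. Consequently $w$ is a substitute (resp.\ complement) to $w'$ if and only if $w'$ is a substitute (resp.\ complement) to $w$.

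I expect the only genuine work to sit in the middle step: checking that $\nu_{\mathbf q}$ really is monotone --- not merely piecewise linear with bounded slopes --- and locating its two limiting values and its breakpoints, which is precisely what distils the substitutability criterion into the symmetric inequality $G_{10}+G_{01}\gtrless G_{00}+G_{11}$. Once that inequality is isolated the result is immediate, and the contrast with the discrete market becomes transparent: there the analogous comparison is between \emph{sets} chosen by the firm rather than between two real numbers, so it can be sensitive to the order of the pair, which is exactly why unidirectional substitutes and complements exist in the discrete setting of Definition~\ref{sc} but not in the quasi-linear one.
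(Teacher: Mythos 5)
Your proof is correct, and it takes a genuinely different route from the paper's. The paper argues directly at the level of optimal demand sets: it first proves a comparative-statics lemma (if a bundle omitting $w$ is optimal at the lower salary for $w$ it stays optimal at the higher one, and vice versa), then shows that when $w'$ drops out of demand every optimal bundle must contain $w$, locates an intermediate salary $\widehat p$ for $w$ at which a bundle containing $w'$ (but not $w$) and a bundle containing $w$ are simultaneously optimal, and finally perturbs $w'$'s salary by $\delta$ to exhibit the reverse substitution or complementarity --- a constructive but case-heavy argument. You instead pass to the indirect profit function, absorb $W\setminus\{w,w'\}$ into the four reduced maxima $G_{00},G_{10},G_{01},G_{11}$, and show that ``$w$ is a substitute (resp.\ complement) to $w'$'' is \emph{equivalent} to the existence of some $\mathbf q$ with $G_{10}(\mathbf q)+G_{01}(\mathbf q)>G_{00}(\mathbf q)+G_{11}(\mathbf q)$ (resp.\ $<$), a condition manifestly symmetric in $w$ and $w'$. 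I checked the one step you flag as the real work: $\nu_{\mathbf q}$ has breakpoints at $a=G_{11}-G_{01}$ and $a=G_{10}-G_{00}$, equals $G_{11}-G_{10}$ to the left of both and $G_{01}-G_{00}$ to the right of both, and is monotone in between with the direction governed by the order of the breakpoints --- exactly as you claim --- and the choices $b=\nu_{\mathbf q}(a)$ (resp.\ $b=\nu_{\mathbf q}(a')$) do witness the definitions. Your argument buys more than the theorem: it yields an if-and-only-if characterization of pairwise substitutability and complementarity as strict super/submodularity of the reduced profit function, which both shortens the proof and explains structurally why bidirectionality holds with transferable utility but fails in the discrete model. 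The paper's proof, by contrast, stays entirely within the primitive language of demand correspondences and requires no reformulation, at the cost of length.
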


We illustrate this result with the following example, which is in contrast to (\ref{exam_in1}).

\begin{example}
\normalfont
Firm $f_1$ and $f_2$ have valuations $v_1$ and $v_2$, respectively, on a skilled worker $s$ and an unskilled worker $u$:
\begin{eqnarray*}
v_1(\{s,u\})=8,\quad v_1(\{s\})=5,\quad v_1(\{u\})=-1,\quad v_1(\emptyset)=0\\
v_2(\{s,u\})=-1,\quad v_2(\{s\})=3,\quad v_2(\{u\})=2,\quad v_2(\emptyset)=0
\end{eqnarray*}

Let $D_1$ and $D_2$ be the demand correspondences of $f_1$ and $f_2$, respectively. When the salary for $s$ is 7 and the salary for $u$ is 2, $f_1$'s best choice is to hire nobody: $D_1((7,2))=\{\emptyset\}$. When the salary for $s$ decreases to 5, hiring both $s$ and $u$ is $f_1$'s best choice: $D_1((5,2))=\{\{s,u\}\}$. $u$ is not demanded at salary vector $(7,2)$, but demanded when $s$ only costs 5 for $f_1$. Hence, $s$ is a complement to $u$ for $f_1$. For $f_2$, we have $D_2((3,1))=\{\{u\}\}$; and $D_2((1,1))=\{\{s\}\}$. $u$ is demanded at salary vector $(3,1)$, but not demanded when $s$ only costs 1 for $f_2$. Hence, $s$ is a substitute to $u$ for $f_2$.

We then show that the complements and substitutes are bidirectional:

(1) Given the salary for $u$ is 2, because $\{s,u\}$ is better than $\emptyset$ for $f_1$ when $s$ costs $f_1$ the lower salary 5 and worse than $\emptyset$ for $f_1$ when $s$ costs $f_1$ the higher salary 7, we know that $f_1$ is indifferent between $\{s,u\}$ and $\emptyset$ when $s$ requires a certain salary, which is 6. We have $D_1((6,2))=\{\{s,u\},\emptyset\}$, then for all $\delta>0$, $D_1((6,2+\delta))=\{\emptyset\}$. Therefore, $u$ is a complement to $s$ for $f_1$ since $s$ is not demanded at $(6,2+\delta)$, but demanded at $(6,2)$.

Suppose instead $v(\{s\})=6$, then $D_1((7,2))=\{\emptyset\}$ and $D_1((5,2))=\{\{s,u\},\{s\}\}$ indicate that $s$ is still a complement to $u$ for $f_1$. We then have $D_1((6,2))=\{\{s,u\},\{s\},\emptyset\}$. In this case, there exists $\delta>0$ such that $D_1((6+\delta,2-\delta))=\{\{s,u\},\emptyset\}$, and then $D_1((6+\delta,2))=\{\emptyset\}$.\footnote{$D_1((6+\delta,2-\delta))=\{\{s,u\},\emptyset\}$ holds when $\delta<p(u)-v(\{u\})=2+1=3$, which guarantees that $\{u\}\notin D_1((6+\delta,2-\delta))$.} Therefore, $s$ is not demanded at $(6+\delta,2)$, but demanded at $(6+\delta,2-\delta)$.

(2) Given the salary for $u$ is 1, because $\{u\}$ is better than $\{s\}$ for $f_2$ when $s$ costs $f_2$ the higher salary 3 and worse than $\{s\}$ for $f_2$ when $s$ costs $f_2$ the lower salary 1, we know that $f_2$ is indifferent between $\{u\}$ and $\{s\}$ when $s$ requires a certain salary, which is 2. We have $D_1((2,1))=\{\{s\},\{u\}\}$, then for all $\delta>0$, $D_2((2,1-\delta))=\{\{u\}\}$. Therefore, $u$ is a substitute to $s$ for $f_2$ since $s$ is demanded at $(2,1)$, but not demanded at $(2,1-\delta)$.
\end{example}

\section{Concluding remarks\label{Sec_Concl}}

We can extend the discrete matching market in this paper in different directions. First, we can generalize the assumption of the two groups of workers to $N>2$ groups,\footnote{Under the framework of the two-sided matching, both \cite{SY06,SY09} and \cite{O08} require that workers are divided into exactly two groups. Their results do not extend to the cases where workers are divided into more than two groups.} where there are no within-group complements, and members of the $j$-th group are neither substitutes nor complements to members of the $k$-th group for all $j,k$ with $1\leq k<j\leq N$. One may find such structures when workers are divided into several levels, where workers of a higher level are more important or competent for firms than those of a lower level. We can find a stable matching in this case by running an $N$-stage DA mechanism.

We can further extend the many-to-one matching to many-to-many matching, where each worker works for a set of firms.\footnote{For many-to-many matching, see \cite{R84,R85}, \cite{S99}, and \cite{EO06}, among others.} When preferences of both skilled and unskilled workers satisfy the substitutes condition, and firms' preferences satisfy the USC condition, we can hopefully produce a desirable outcome by running a mechanism similar to our two-stage DA. We can also consider contract terms between firms and workers, where contracts may specify wages, insurances, retirement plans, and other job descriptions.\footnote{See e.g., \cite{R84} and \cite{HM05}.} Finally, we can extend the many-to-many model to network structures. For instance, unskilled workers may also want to go to training schools. The preferences of unskilled workers may satisfy the USC condition on the group of firms and the group of training schools if their selections on firms are more important for them. In this case, we have a ``(skilled workers)-(firms)-(unskilled workers)-(training schools)'' network. However, this example seems rather restricted. We leave for future research on economic applications under these extended frameworks.
\section{Appendix}

\subsection{Proof of Theorem \ref{thm_main}}

We first prove the following lemma.

\begin{lemma}\label{lma_NC}
\normalfont
For any $f\in F$, any $w\in W$, and any $X,X'\subseteq W$ with $w\in X$ and $w\notin X'$,

(i) if no worker from $X'$ is a complement to $w$ for $f$, then $w\in Ch_f(X\cup X')$ implies $w\in Ch_f(X)$;

(ii) if no worker from $X'$ is a substitute to $w$ for $f$, then $w\in Ch_f(X)$ implies $w\in Ch_f(X\cup X')$.

\end{lemma}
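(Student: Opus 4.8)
The plan is to prove both parts by induction on $|X'|$, peeling off the workers of $X'$ one at a time and invoking the single-worker definitions in Definition \ref{sc} at each step. The base case $X'=\emptyset$ is trivial, since then $X\cup X'=X$ and both implications are tautologies.

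For part (i), enumerate $X'=\{x_1,\dots,x_k\}$ and set $Y_0=X\cup X'$ and $Y_j=Y_{j-1}\setminus\{x_j\}$, so that $Y_k=X$, and note that $w\in Y_j$ for every $j$ because $w\in X$ and $w\notin X'$. It suffices to show that $w\in Ch_f(Y_{j-1})$ implies $w\in Ch_f(Y_j)$ for each $j$, and then chain these implications from $j=1$ to $k$. Fix $j$. The set $Y_{j-1}$ contains both $x_j$ and $w$, and $x_j\in X'$ is not a complement to $w$ for $f$; by the definition of complement, it cannot happen that $w\notin Ch_f(Y_{j-1}\setminus\{x_j\})$ while $w\in Ch_f(Y_{j-1})$. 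Hence $w\in Ch_f(Y_{j-1})$ forces $w\in Ch_f(Y_{j-1}\setminus\{x_j\})=Ch_f(Y_j)$, as desired.

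For part (ii), enumerate $X'=\{x_1,\dots,x_k\}$ and set $Z_0=X$ and $Z_j=Z_{j-1}\cup\{x_j\}$, so that $Z_k=X\cup X'$, $w\in Z_j$ for all $j$, and $x_j\notin Z_{j-1}$. It suffices to show $w\in Ch_f(Z_{j-1})$ implies $w\in Ch_f(Z_j)$, and chain from $j=1$ to $k$. Fix $j$ and apply the definition of substitute to the set $Z_j=Z_{j-1}\cup\{x_j\}$: it contains both $x_j$ and $w$, and $Z_j\setminus\{x_j\}=Z_{j-1}$ since $x_j\notin Z_{j-1}$. Because $x_j\in X'$ is not a substitute to $w$ for $f$, it cannot happen that $w\in Ch_f(Z_j\setminus\{x_j\})=Ch_f(Z_{j-1})$ while $w\notin Ch_f(Z_j)$. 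Hence $w\in Ch_f(Z_{j-1})$ forces $w\in Ch_f(Z_j)$.

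I do not anticipate a real obstacle; the argument is essentially bookkeeping. The only points that need care are verifying at each stage that $w$ remains in the current set, that the worker being removed (resp. added) is present in (resp. absent from) the current set so that the definitions of complement and substitute apply verbatim, and that the hypothesis is used in the form ``each individual $x_j\in X'$ is not a complement (resp. substitute) to $w$'' rather than as a statement about the set $X'$ as a whole.
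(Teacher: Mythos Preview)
Your argument is essentially the same as the paper's: both proofs peel off the elements of $X'$ one at a time and invoke Definition~\ref{sc} at each step (the paper phrases part (i) contrapositively, starting from $X$ and adding workers, whereas you start from $X\cup X'$ and remove them, but this is cosmetic). One small oversight: the lemma does not assume $X\cap X'=\emptyset$, so your claims ``$Y_k=X$'' in (i) and ``$x_j\notin Z_{j-1}$'' in (ii) need not hold as stated; this is harmless, since you may replace $X'$ by $X'\setminus X$ at the outset without loss of generality (or, in (ii), simply note that the step is vacuous when $x_j\in Z_{j-1}$).
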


\begin{proof}
(i) Suppose for some $f,w,X$, and $X'$, $w\in Ch_f(X\cup X')$ and $w\notin Ch_f(X)$. Since no worker from $X'$ is a complement to $w$ for $f$, for any $w'\in X'$, we have $w\notin Ch_f(X\cup\{w'\})$. Then, for any $w''\in X'$ with $w''\neq w'$, we have $w\notin Ch_f(X\cup\{w'\}\cup\{w''\})$. Repeat this argument, we have $w\notin Ch_f(X\cup X')$. A contradiction.

(ii) Since no worker from $X'$ is a substitute to $w$ for $f$, for any $w'\in X'$, we have $w\in Ch_f(X\cup\{w'\})$. Then, for any $w''\in X'$ with $w''\neq w'$, we have $w\in Ch_f(X\cup\{w'\}\cup\{w''\})$. Repeat this argument, we have $w\in Ch_f(X\cup X')$.
\end{proof}

We define a \textbf{skilled-worker matching} by replacing $W$ by $S$ in Definition \ref{Def_matching}. The first stage of the mechanism produces a skilled-worker matching. Given a skilled-worker matching $\theta$, for each $f\in F$, let $A^S_f(\theta)\equiv\{s\in S\mid f\succeq_s \theta(s)\}$ be the available set of skilled workers for $f$ at $\theta$, i.e., the set of skilled workers who are matched with $f$, or with firms that are less preferred than $f$ according to their preferences. Analogously, given a matching $\eta$, for each $f\in F$ define $A_f(\eta)\equiv\{w\in W\mid f\succeq_w \eta(w)\}$ to be the available set of workers for $f$ at $\eta$. Note that $\theta(f)\subseteq A^S_f(\theta)$ and $\eta(f)\subseteq A_f(\eta)$.

Let $\theta$ and $\eta$ be the skilled-worker matching and the matching produced by the first stage and the two-stage DA, respectively. We now prove the following two lemmata.

\begin{lemma}\label{lma_stage1}
\normalfont
For each $f\in F$, $Ch_f(A^S_f(\theta))=\theta(f)$.
\end{lemma}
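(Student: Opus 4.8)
The plan is to recognize the first stage as an ordinary worker-proposing DA run on the reduced market $(F,S,P_F|_S,P_S)$ in which only skilled workers participate, and then to invoke the classical characterization of its outcome. By part (i) of the USC condition there are no complements among skilled workers, so the restriction of each $P_f$ to subsets of $S$ satisfies the substitutes condition of \cite{RS90}; hence the first stage is well defined, terminates, and $\theta$ is the worker-optimal stable matching of that reduced market.

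First I would record the standard fact about such a run: for each $f\in F$, letting $R_f\subseteq S$ be the set of skilled workers who propose to $f$ at some step of the first stage, substitutability guarantees that a skilled worker once rejected by $f$ is never afterward chosen by $f$ from a larger applicant pool, so the set $f$ holds at termination is exactly $Ch_f(R_f)$; thus $\theta(f)=Ch_f(R_f)$. Next I would establish the sandwich $\theta(f)\subseteq A^S_f(\theta)\subseteq R_f$. The left inclusion is immediate since every $s\in\theta(f)$ has $\theta(s)=f$. For the right inclusion, take $s\in A^S_f(\theta)$, so $f\succeq_s\theta(s)$: if $\theta(s)=f$ then $s$ proposed to $f$; otherwise $f\succ_s\theta(s)$, and since in the DA a worker proposes strictly down her preference list and ends at $\theta(s)$, she proposed to every firm she ranks above $\theta(s)$, in particular to $f$. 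Either way $s\in R_f$.

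It then remains to apply the elementary observation that $Ch_f(Y)\subseteq X\subseteq Y$ implies $Ch_f(X)=Ch_f(Y)$: indeed $Ch_f(Y)$ is a subset of $X$ that is $\succeq_f$-optimal over all subsets of $Y$, hence over all subsets of $X$, and strictness of $\succeq_f$ forces the two choices to coincide. Taking $Y=R_f$ and $X=A^S_f(\theta)$ together with $\theta(f)=Ch_f(R_f)\subseteq A^S_f(\theta)\subseteq R_f$ yields $Ch_f(A^S_f(\theta))=Ch_f(R_f)=\theta(f)$, as claimed.

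The only non-bookkeeping point, and hence the main obstacle, is the identity $\theta(f)=Ch_f(R_f)$ --- that the first stage terminates with each firm holding precisely its choice from everyone who ever approached it. This is exactly where part (i) of the USC condition is used; I would either cite the corresponding result for substitutable preferences (e.g., \citealp{RS90}) or reprove it by the usual induction showing that the set of applicants rejected by a firm grows monotonically along the run.
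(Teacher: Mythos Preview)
Your proof is correct and essentially the same as the paper's: both rest on the fact that, by substitutability among skilled workers, $\theta(f)=Ch_f(R_f)$ where $R_f$ is the set of skilled workers who ever proposed to $f$. The paper simply observes that $A^S_f(\theta)=R_f$ outright (individual rationality of $\theta$ gives the inclusion $R_f\subseteq A^S_f(\theta)$ that you did not use), so your sandwich step and the $Ch_f(Y)\subseteq X\subseteq Y$ observation are unnecessary; it also proves $\theta(f)=Ch_f(R_f)$ in place via its Lemma~\ref{lma_NC}(i) rather than deferring to \cite{RS90}.
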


\begin{proof}
Fix a firm $f\in F$. $A^S_f(\theta)$ is the set of skilled workers who have proposed to $f$ in the first stage. For each $s$ such that $s\in A^S_f(\theta)$ and $s\notin\theta(f)$, $s$ is rejected by $f$ from a subset of $A^S_f(\theta)$ at some step; then because of assumption (i) of the USC condition and Lemma \ref{lma_NC}(i), we have $s\notin Ch_f(A^S_f(\theta))$. Hence, we have $Ch_f(A^S_f(\theta))\subseteq \theta(f)$. Since in the last step of the first stage $\theta(f)$ is chosen by $f$ from a subset of $A^S_f(\theta)$ that contains $\theta(f)$, we know that $f$ prefers $\theta(f)$ to any subset of $\theta(f)$. Therefore, we have $Ch_f(A^S_f(\theta))=\theta(f)$.
\end{proof}

\begin{lemma}\label{lma_stage2}
\normalfont
For each $f\in F$, $Ch_f(A_f(\eta))=\eta(f)$
\end{lemma}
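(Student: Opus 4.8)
The plan is to treat the second stage as a worker‑proposing Deferred Acceptance run among the unskilled workers against a frozen background of skilled workers, and then to transfer the usual DA identity back to the original choice functions $Ch_f$. I would proceed in three steps: (1) show that the second stage leaves the skilled allocation $\theta$ untouched; (2) show that, with the skilled workers of $f$ held fixed at $\theta(f)$, $f$'s induced choice rule over unskilled workers satisfies the substitutes condition, so the second stage is a textbook DA; (3) reassemble the two parts using the general consistency (``irrelevance of rejected workers'') property of choice functions, namely $Ch_f(Z)\subseteq Z'\subseteq Z\Rightarrow Ch_f(Z')=Ch_f(Z)$, which holds for any choice function derived from a strict preference.

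For step (1) I would argue by induction on the steps of the second stage that after each step the skilled workers tentatively held by $f$ are exactly $\theta(f)$. This holds before the second stage by construction. At a generic step $f$ holds $\theta(f)$ together with some $V\subseteq U$ and faces new unskilled applicants $N\subseteq U$, re‑optimizing to $Ch_f(\theta(f)\cup V\cup N)$. We have $Ch_f(\theta(f))=\theta(f)$ (a consequence of Lemma~\ref{lma_stage1}), and by assumption (ii) of the USC condition (Definition~\ref{Def_usc}) no unskilled worker is a substitute to any skilled worker, so Lemma~\ref{lma_NC}(ii) yields $\theta(f)\subseteq Ch_f(\theta(f)\cup V\cup N)$; since $V\cup N\subseteq U$ and no skilled worker ever proposes in the second stage, the skilled part of $f$'s set stays $\theta(f)$. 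Hence $\eta(f)\cap S=\theta(f)$ for every $f$, so $\eta(s)=\theta(s)$ for every $s\in S$ and $A_f(\eta)\cap S=A^S_f(\theta)$.

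For step (2), define $g_f(V):=Ch_f(\theta(f)\cup V)\cap U$ for $V\subseteq U$; the computation in step (1) shows $Ch_f(\theta(f)\cup V)=\theta(f)\cup g_f(V)$. Applying Definition~\ref{sc} with the set $X=\theta(f)\cup V$, assumption (i) of the USC condition (no complements within $U$) forces $g_f$ to have no complements, i.e.\ $g_f$ satisfies the substitutes condition of Definition~\ref{Def_sub}. Therefore the second stage, restricted to the unskilled coordinates, is exactly the worker‑proposing DA of the auxiliary market with workers $U$, firm choice rules $g_f$, and the workers' original preferences; since the $g_f$ are substitutable, the standard DA analysis of \cite{RS90} applies: rejections are irreversible and, using the consistency property, each firm ends the second stage holding $g_f(A'_f)$, where $A'_f$ is the set of unskilled workers who ever propose to $f$ in the second stage. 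A worker proposes to $f$ iff she weakly prefers $f$ to her final match, so $A'_f=\{u\in U:f\succeq_u\eta(u)\}=A_f(\eta)\cap U$, whence $\eta(f)\cap U=g_f(A_f(\eta)\cap U)$.

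For step (3), write $A_f(\eta)=A^S_f(\theta)\cup(A_f(\eta)\cap U)$. By Lemma~\ref{lma_stage1}, $Ch_f(A^S_f(\theta))=\theta(f)$, and by assumption (ii) together with Lemma~\ref{lma_NC}(i),(ii), adjoining the unskilled set $A_f(\eta)\cap U$ neither drops a chosen skilled worker nor resurrects a rejected one, so $Ch_f(A_f(\eta))\subseteq\theta(f)\cup(A_f(\eta)\cap U)\subseteq A_f(\eta)$. The consistency property then gives $Ch_f(A_f(\eta))=Ch_f\big(\theta(f)\cup(A_f(\eta)\cap U)\big)=\theta(f)\cup g_f(A_f(\eta)\cap U)=(\eta(f)\cap S)\cup(\eta(f)\cap U)=\eta(f)$, as desired. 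I expect step (2) to be the main obstacle: one has to check carefully that freezing the skilled workers at $\theta(f)$ introduces no complementarity among the unskilled workers (so $g_f$ is genuinely substitutable), and that the classical DA bookkeeping — irreversibility of rejection and ``final held set equals choice from all proposers'' — is valid for an arbitrary substitutable choice function, not merely a responsive one. The rest is routine given Lemma~\ref{lma_NC} and the consistency property.
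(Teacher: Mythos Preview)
Your proof is correct and the overall architecture is sound. The approach, however, is genuinely different from the paper's.

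The paper argues directly at the level of $Ch_f$: it first shows (as you do) that every $s\in\theta(f)$ survives each second-stage step, then shows $Ch_f(A_f(\eta))\subseteq\eta(f)$ by checking, for each rejected worker, that she stays rejected when the pool grows to $A_f(\eta)$. For a rejected skilled worker this is one application of Lemma~\ref{lma_NC}(i). For a rejected unskilled worker $u$, rejected from some $X\supseteq\theta(f)$, the paper adds the skilled workers of $A^S_f(\theta)\setminus\theta(f)$ back into $X$ one at a time, arguing each time that the added skilled worker is herself not chosen (so $Ch_f$ stays inside $X$ and $u$ remains unchosen), and only then enlarges by the remaining unskilled workers via Lemma~\ref{lma_NC}(i). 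Everything is done by hand from Lemma~\ref{lma_NC}; no auxiliary choice function and no appeal to DA theory are used.

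Your route is more structural: you freeze the skilled allocation, define the induced choice rule $g_f(V)=Ch_f(\theta(f)\cup V)\cap U$, verify it is substitutable (and, implicitly, that it inherits the consistency property from $Ch_f$, which you should state explicitly since $g_f$ is not itself ``derived from a strict preference''), recognise the second stage as a bona fide DA for $(U,(g_f)_f)$, and import the standard identity ``final holdings $=$ choice from all proposers''. You then prune the rejected skilled workers in one stroke via IRC for $Ch_f$ and reassemble. This buys modularity and a clean reduction to known DA theory, and would scale naturally to the $N$-stage extension mentioned in Section~\ref{Sec_Concl}. The paper's argument buys self-containment: it never leaves the primitive $Ch_f$ and does not rely on the reader knowing that the DA bookkeeping goes through for an abstract substitutable-plus-IRC choice rule.
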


\begin{proof}
Fix a firm $f\in F$. $A_f(\eta)$ is the set of workers who have proposed to $f$ during two stages. At the last step of the first stage, $\theta(f)$ is chosen by $f$, and thus we also have $\theta(f)=Ch_f(\theta(f))$. Then, at Step 2.1, because no unskilled worker is a substitute to any skilled worker, by Lemma \ref{lma_NC}(ii) we know that for any $s\in S$, $s\in \theta(f)$ implies that $s$ is chosen by $f$ at Step 2.1. For similar reasons, we know that for any $s\in S$, if $s\in \theta(f)$, then $s$ is chosen by $f$ at each step of the second stage, and thus $s\in \theta(f)$ implies $s\in \eta(f)$.

Hence, for each $s\in S$ such that $s\in A_f(\eta)$ and $s\notin\eta(f)$, $s$ is rejected by $f$ at some step of the first stage from a subset of $A_f(\eta)$. Because no worker is a complement to $s$, by Lemma \ref{lma_NC}(i) we have $s\notin Ch_f(A_f(\eta))$.

For each $u\in U$ such that $u\in A_f(\eta)$ and $u\notin\eta(f)$, $u$ is rejected by $f$ at some step of the second stage from a subset of $A_f(\eta)$, say $X$, where $\theta(f)\subseteq X$. For any $s\in S$ such that $s\in A^S_f(\theta)$ and $s\notin\theta(f)$, by Lemma \ref{lma_stage1} we have $s\notin Ch_f(\theta(f)\cup \{s\})$. Then, by Lemma \ref{lma_NC}(i) we have $s\notin Ch_f(X\cup \{s\})$ since $\theta(f)\subseteq X$ and no worker is a complement to $s$. Then, $Ch_f(X\cup \{s\})\subseteq X$, and thus $u\notin Ch_f(X)$ implies $u\notin Ch_f(X\cup \{s\})$. For any $s'$ such that $s'\in A^S_f(\theta)$, $s'\notin\theta(f)$ and $s'\neq s$, by Lemma \ref{lma_stage1} we have $s'\notin Ch_f(\theta(f)\cup \{s\}\cup \{s'\})$. By Lemma \ref{lma_NC}(i) we have $s'\notin Ch_f(X\cup \{s\}\cup \{s'\})$ since $\theta(f)\subseteq X$ and no worker is a complement to $s'$. Then, $Ch_f(X\cup \{s\}\cup \{s'\})\subseteq X$, and thus $u\notin Ch_f(X)$ implies $u\notin Ch_f(X\cup \{s\}\cup \{s'\})$. Repeat this argument, we have $u\notin Ch_f(X\cup [A^S_f(\theta)\setminus\theta(f)])=Ch_f(X\cup A^S_f(\theta))$, where the equality is due to $\theta(f)\subseteq X$. Because $X\subseteq A_f(\eta)$ and $A_f(\eta)\setminus A^S_f(\theta)$ contains only unskilled workers, by Lemma \ref{lma_NC}(i) we have $u\notin Ch_f(A_f(\eta))$.

From the last two paragraphs, we have $Ch_f(A_f(\eta))\subseteq \eta(f)$. Since in the last step of the second stage $\eta(f)$ is chosen by $f$ from a subset of $A_f(\eta)$, we know that $f$ prefers $\eta(f)$ to any subset of $\eta(f)$. Therefore, we have $Ch_f(A_f(\eta))=\eta(f)$.
\end{proof}

The procedure of the mechanism guarantees the individual rationality of $\eta$. Suppose $f$ and $X\subseteq W$ form a blocking coalition. Since $f\succeq_w\eta(w)$ for all $w\in X$, we have $X\subseteq A_f(\eta)$. Then, $X\succ_f \eta(f)$ contradicts Lemma \ref{lma_stage2}. Therefore, there is no blocking coalition, and thus $\eta$ is stable.

\subsection{Proof of Theorem \ref{thm_school}}

(1) Assumption (i) of Definition \ref{Def_usc}.

For any $s,s'\in S$ and any $X\subseteq W$ with $\{s,s'\}\subseteq X$, such that $s'\notin Ch_f(X\setminus\{s\})$, by (\ref{sc1}) the number of students from $X\setminus\{s\}$ who have higher priority than $s'$ is not less than $q_f$. Thus, we have $s'\notin Ch_f(X)$ as well. Therefore, $s$ is not a complement to $s'$.

For any $u,u'\in U$ and any $X\subseteq W$ with $\{u,u'\}\subseteq X$, such that $u'\notin Ch_f(X\setminus\{u\})$, by (\ref{sc2}) we have either (a) $\mid (X\setminus\{u\})^{\rhd_f u'} \mid\geq q_f$ or (b) $1-\frac{\mid (X\setminus\{u\})\cap S\mid}{\mid (X\setminus\{u\})^{\rhd_f u'} \mid+1}>\alpha_f$. If (a) holds, then $\mid X^{\rhd_f u'} \mid\geq q_f$, and thus $u'\notin Ch_f(X)$. If (b) holds, since $\mid X^{\rhd_f u'} \mid\geq \mid (X\setminus\{u\})^{\rhd_f u'} \mid$, we have $1-\frac{\mid X\cap S\mid}{\mid X^{\rhd_f u'} \mid+1}>\alpha_f$, and thus $u'\notin Ch_f(X)$. Therefore, $u$ is not a complement to $u'$.
\\
\quad\\
(2) Assumption (ii) of Definition \ref{Def_usc}.

For any $s\in S$, $u\in U$, and any $X\subseteq W$ with $\{s,u\}\subseteq X$, such that $s\in Ch_f(X\setminus\{u\})$, the number of students from $X\setminus\{u\}$ who have higher priority than $s$ is less than $q_f$. Since $s\rhd_f u$, we have $s\in Ch_f(X)$ as well. Therefore, $u$ is not a substitute to $s$.

For any $s\in S$, $u\in U$, and any $X\subseteq W$ with $\{s,u\}\subseteq X$, such that $s\notin Ch_f(X\setminus\{u\})$, the number of students from $X\setminus\{u\}$ who have higher priority than $s$ is no less than $q_f$. Thus, we have $s\notin Ch_f(X)$ as well. Therefore, $u$ is not a complement to $s$.

\subsection{Proof of Theorem \ref{thm_bidirection}}
\begin{lemma}\label{lma}
\normalfont
For any $w\in W$, $\mathbf{p}$ and $\mathbf{p}'$ such that $p(w)>p'(w)$ and $p(\widetilde{w})=p'(\widetilde{w})$ for all $\widetilde{w}\neq w$,

(i) If $X\in D(\mathbf{p}')$ and $w\notin X$, then $X\in D(\mathbf{p})$.

(ii) If $X\in D(\mathbf{p})$ and $w\in X$, then $X\in D(\mathbf{p}')$.
\end{lemma}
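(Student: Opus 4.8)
The plan is to exploit the additive, quasi-linear structure of the firm's objective in problem (\ref{max}). For a salary vector $\mathbf{q}\in\mathbb{R}^n$ and a set $Y\subseteq W$, write $\Pi(Y,\mathbf{q}):=v(Y)-c(Y,\mathbf{q})$ for the firm's net profit, so that $D(\mathbf{q})=\arg\max_{Y\subseteq W}\Pi(Y,\mathbf{q})$. Because $\mathbf{p}$ and $\mathbf{p}'$ agree in every coordinate except that of $w$, and because $c(\cdot,\mathbf{q})$ is additive over workers, the profit of a set is unchanged when we switch from $\mathbf{p}'$ to $\mathbf{p}$ if the set omits $w$, and it drops by exactly $\delta:=p(w)-p'(w)>0$ if the set contains $w$; that is, $\Pi(Y,\mathbf{p})=\Pi(Y,\mathbf{p}')$ when $w\notin Y$, and $\Pi(Y,\mathbf{p})=\Pi(Y,\mathbf{p}')-\delta$ when $w\in Y$. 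Both parts then follow by pitting the claimed optimal set against an arbitrary competitor $Y$ and splitting into the cases $w\in Y$ and $w\notin Y$.

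For part (i), suppose $X\in D(\mathbf{p}')$ and $w\notin X$, and let $Y\subseteq W$ be arbitrary. If $w\notin Y$, then $\Pi(Y,\mathbf{p})=\Pi(Y,\mathbf{p}')\le\Pi(X,\mathbf{p}')=\Pi(X,\mathbf{p})$, where the inequality uses optimality of $X$ at $\mathbf{p}'$ and the last equality uses $w\notin X$. If $w\in Y$, then $\Pi(Y,\mathbf{p})=\Pi(Y,\mathbf{p}')-\delta<\Pi(Y,\mathbf{p}')\le\Pi(X,\mathbf{p}')=\Pi(X,\mathbf{p})$. In either case $\Pi(X,\mathbf{p})\ge\Pi(Y,\mathbf{p})$, so $X$ is profit-maximizing at $\mathbf{p}$, i.e. $X\in D(\mathbf{p})$.

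For part (ii), suppose $X\in D(\mathbf{p})$ and $w\in X$, and let $Y\subseteq W$ be arbitrary. If $w\in Y$, then $\Pi(Y,\mathbf{p}')=\Pi(Y,\mathbf{p})+\delta\le\Pi(X,\mathbf{p})+\delta=\Pi(X,\mathbf{p}')$, using optimality of $X$ at $\mathbf{p}$ together with $w\in X$. If $w\notin Y$, then $\Pi(Y,\mathbf{p}')=\Pi(Y,\mathbf{p})\le\Pi(X,\mathbf{p})<\Pi(X,\mathbf{p})+\delta=\Pi(X,\mathbf{p}')$. Hence $\Pi(X,\mathbf{p}')\ge\Pi(Y,\mathbf{p}')$ for all $Y$, so $X\in D(\mathbf{p}')$.

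There is no genuinely hard step here: this is just monotone comparative statics for an additively separable objective. The only points requiring care are getting the sign of $\delta$ right and tracking which inequality in each case is strict versus weak, so that optimality is transported in the correct direction between $\mathbf{p}$ and $\mathbf{p}'$. It is worth noting that each conclusion is one-directional---in (i) an optimum omitting $w$ is pushed from the low-$p(w)$ vector to the high one, and in (ii) an optimum containing $w$ is pushed from the high-$p(w)$ vector to the low one---and this asymmetry is exactly what Theorem \ref{thm_bidirection} will later exploit, applying the lemma in turn to the salary coordinate of $w$ and to that of $w'$ to convert a substitute/complement witness for $(w,w')$ into one for $(w',w)$.
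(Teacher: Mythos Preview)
Your proof is correct and follows essentially the same approach as the paper's own proof: both exploit that $c(\cdot,\mathbf{q})$ is additive, so changing only $p(w)$ leaves the profit of sets omitting $w$ unchanged and shifts the profit of sets containing $w$ by the same constant, which transports optimality in the stated direction. Your version is slightly more explicit in splitting into the two cases $w\in Y$ and $w\notin Y$, but the content is identical.
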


\begin{proof}
(i) $X\in D(\mathbf{p}')$ indicates $v(X)-c(X,\mathbf{p}')\geq v(\widetilde{X})-c(\widetilde{X},\mathbf{p}')$ for all $\widetilde{X}\subseteq W$, then for any $\mathbf{p}$ such that $p(w)>p'(w)$ and $p(\widetilde{w})=p'(\widetilde{w})$ for all $\widetilde{w}\neq w$, $w\notin X$ implies $v(X)-c(X,\mathbf{p})=v(X)-c(X,\mathbf{p}')\geq v(\widetilde{X})-c(\widetilde{X},\mathbf{p}')\geq v(\widetilde{X})-c(\widetilde{X},\mathbf{p})$ for all $\widetilde{X}\subseteq W$, and thus we have $X\in D(\mathbf{p})$.

(ii) $X\in D(\mathbf{p})$ indicates $v(X)-c(X,\mathbf{p})\geq v(X')-c(X',\mathbf{p})$ for all $X'\subseteq W$. Since $w\in X$, we have $v(X)-c(X,\mathbf{p}')\geq v(X')-c(X',\mathbf{p}')$ for all $X'\subseteq W$.
\end{proof}

(1) Suppose $w$ is a substitute to $w'$ for the firm, then there exist $\mathbf{p},\mathbf{p}'\in \mathbb{R}^n$ with $p(w)>p'(w)$ and $p'(\widetilde{w})=p(\widetilde{w})$ for all $\widetilde{w}\neq w$ such that $w'$ is demanded at $\mathbf{p}$ but not demanded at $\mathbf{p}'$. For any $X\in D(\mathbf{p})$, we cannot have both $w\in X$ and $w'\in X$, otherwise, by Lemma \ref{lma}(ii), we know that $X\in D(\mathbf{p}')$, which contradicts that $w'$ is not demanded at $\mathbf{p}'$.

We then know that the firm must attain a strictly larger maximum to (\ref{max}) under $\mathbf{p}'$ than under $\mathbf{p}$. Otherwise, suppose the firm attains the same maximum to (\ref{max}) under $\mathbf{p}$ and $\mathbf{p}'$, then for each $X\in D(\mathbf{p})$ with $w'\in X$, since $w\notin X$, we would have $X\in D(\mathbf{p}')$. This also contradicts that $w'$ is not demanded at $\mathbf{p}'$.

For any $X'\in D(\mathbf{p}')$, we must have $w\in X'$. Otherwise, suppose there exists $X'\in D(\mathbf{p}')$ with $w\notin X'$, then by Lemma \ref{lma}(i), we know that $X'\in D(\mathbf{p})$. Because $X'$ belongs to both $D(\mathbf{p})$ and $D(\mathbf{p}')$, the firm attains the same maximum to (\ref{max}) under $\mathbf{p}$ and $\mathbf{p}'$. A contradiction.

For any $X\in D(\mathbf{p})$ with $w'\in X$ and $X'\in D(\mathbf{p}')$ (note that $w\notin X$ and $w\in X'$), we have $v(X)-c(X,\mathbf{p})\geq v(X')-c(X',\mathbf{p})$ and $v(X)-c(X,\mathbf{p}')<v(X')-c(X',\mathbf{p}')$. The last inequality is strict because $w'$ is not demanded at $\mathbf{p}'$ implies $X\notin D(\mathbf{p}')$.
Then there exists $\widehat{p}\in(p'(w),p(w)]$ such that $v(X')-c(X',\mathbf{\widehat{p}})=v(X)-c(X,\mathbf{p})=v(X)-c(X,\mathbf{\widehat{p}})$ where $\widehat{p}(w)=\widehat{p}$ and $\widehat{p}(\widetilde{w})=p(\widetilde{w})$ for all $\widetilde{w}\neq w$. The last equality is because $w\notin X$. We then show that $X,X'\in D(\mathbf{\widehat{p}})$. Suppose instead $X,X'\notin D(\mathbf{\widehat{p}})$ and $X''\in D(\mathbf{\widehat{p}})$, then (i) if $w\in X''$, $v(X'')-c(X'',\mathbf{\widehat{p}})>v(X')-c(X',\mathbf{\widehat{p}})$ and $w\in X'$ implies $v(X'')-c(X'',\mathbf{p}')>v(X')-c(X',\mathbf{p}')$, which contradicts $X'\in D(\mathbf{p}')$; (ii) if $w\notin X''$, $v(X'')-c(X'',\mathbf{\widehat{p}})>v(X)-c(X,\mathbf{\widehat{p}})$ and $w\notin X$ implies $v(X'')-c(X'',\mathbf{p})>v(X)-c(X,\mathbf{p})$, which contradicts $X\in D(\mathbf{p})$. Finally, we know for all $\delta>0$ $w$ is not demanded at $\mathbf{\widehat{p}}'$, where $\widehat{p}'(w')=\widehat{p}(w')-\delta$ and $\widehat{p}'(\widetilde{w})=\widehat{p}(\widetilde{w})$ for all $\widetilde{w}\neq w'$. This is because for any $X''\subseteq W$ with $w\in X''$,

(i) if $w'\notin X''$, we have

\begin{align*}
v(X'')-c(X'',\mathbf{\widehat{p}}')&=v(X'')-c(X'',\mathbf{\widehat{p}})\\
&\leq v(X)-c(X,\mathbf{\widehat{p}})\\
&<v(X)-c(X,\mathbf{\widehat{p}}')
\end{align*}
The first inequality is due to $X\in D(\mathbf{\widehat{p}})$. The second inequality is because $w'\in X$. Thus, we have $X''\notin D(\mathbf{\widehat{p}}')$.

(ii) if $w'\in X''$, we have $X''\notin D(\mathbf{p}')$, and then

\begin{align*}
v(X'')-c(X'',\mathbf{\widehat{p}}')&=v(X'')-c(X'',\mathbf{\widehat{p}})+\delta\\
&= v(X'')-c(X'',\mathbf{p}')+p'(w)-\widehat{p}+\delta\\
&<v(X')-c(X',\mathbf{p}')+p'(w)-\widehat{p}+\delta
\end{align*}

The inequality is because $X''\notin D(\mathbf{p}')$ and $X'\in D(\mathbf{p}')$. We then have
\begin{align*}
v(X')-c(X',\mathbf{p}')+p'(w)-\widehat{p}+\delta&=v(X')-c(X',\mathbf{\widehat{p}})+\delta\\
&=v(X)-c(X,\mathbf{\widehat{p}})+\delta\\
&=v(X)-c(X,\mathbf{\widehat{p}}')
\end{align*}

The second equality is because $X,X'\in D(\mathbf{\widehat{p}})$. Therefore, we have $v(X'')-c(X'',\mathbf{\widehat{p}}')<v(X)-c(X,\mathbf{\widehat{p}}')$, which indicates $X''\notin D(\mathbf{\widehat{p}}')$.

We have done since $w$ is demanded at $\mathbf{\widehat{p}}$, but not demanded at $\mathbf{\widehat{p}}'$.

(2) Suppose $w$ is a complement to $w'$ for the firm, then there exist $\mathbf{p},\mathbf{p}'\in \mathbb{R}^n$ with $p(w)>p'(w)$ and $p(\widetilde{w})=p'(\widetilde{w})$ for all $\widetilde{w}\neq w$ such that $w'$ is not demanded at $\mathbf{p}$ but demanded at $\mathbf{p}'$. For any $X'\in D(\mathbf{p}')$ with $w'\in X'$, we have $w\in X'$, otherwise by Lemma \ref{lma}(i) we have $X'\in D(\mathbf{p})$, which contradicts that $w'$ is not demanded at $\mathbf{p}$.

We also know that $w$ is not demanded at $\mathbf{p}$. Otherwise, suppose there exists $X\in D(\mathbf{p})$ with $w\in X$, then by Lemma \ref{lma}(ii) $X\in D(\mathbf{p}')$. For any $X'\in D(\mathbf{p}')$ with $w,w'\in X'$, we have $v(X')-c(X',\mathbf{p}')=v(X)-c(X,\mathbf{p}')$, and then $v(X')-c(X',\mathbf{p}')+p'(w)-p(w)=v(X)-c(X,\mathbf{p}')+p'(w)-p(w)$, which is $v(X')-c(X',\mathbf{p})=v(X)-c(X,\mathbf{p})$ since $w\in X'$ and $w\in X$. Then $X\in D(\mathbf{p})$ implies $X'\in D(\mathbf{p})$, which contradicts that $w'$ is not demanded at $\mathbf{p}$.

For any $X'\in D(\mathbf{p}')$ with $w,w'\in X'$ and for any $X\in D(\mathbf{p})$ (note that $w\notin X$ and $w'\notin X$), we have $v(X')-c(X',\mathbf{p}')\geq v(X)-c(X,\mathbf{p}')$, and $v(X')-c(X',\mathbf{p})< v(X)-c(X,\mathbf{p})$. The last inequality is strict because $w$ and $w'$ are not demanded at $\mathbf{p}$ implies $X'\notin D(\mathbf{p})$. Then, there exists $\widehat{p}\in[p'(w),p(w))$ such that $v(X')-c(X',\mathbf{\widehat{p}})=v(X)-c(X,\mathbf{\widehat{p}})$ where $\widehat{p}(w)=\widehat{p}$ and $\widehat{p}(\widetilde{w})=p(\widetilde{w})$ for all $\widetilde{w}\neq w$. We would have $X,X'\in D(\mathbf{\widehat{p}})$, otherwise suppose $X''\in D(\mathbf{\widehat{p}})$ with $v(X'')-c(X'',\mathbf{\widehat{p}})>v(X)-c(X,\mathbf{\widehat{p}})$$=v(X')-c(X',\mathbf{\widehat{p}})$, then (i) if $w\notin X''$, then  $v(X'')-c(X'',\mathbf{\widehat{p}})>v(X)-c(X,\mathbf{\widehat{p}})$ implies $v(X'')-c(X'',\mathbf{p})>v(X)-c(X,\mathbf{p})$, which contradicts $X\in D(\mathbf{p})$; (ii) if $w\in X''$, then  $v(X'')-c(X'',\mathbf{\widehat{p}})>v(X')-c(X',\mathbf{\widehat{p}}')$ implies $v(X'')-c(X'',\mathbf{p}')>v(X')-c(X',\mathbf{p}')$, which contradicts $X'\in D(\mathbf{p}')$.

Let $\gamma=\min_{w\notin Y,w'\in Y} v(X)-c(X,\mathbf{\widehat{p}})-[v(Y)-c(Y,\mathbf{\widehat{p}})]$, then for all $\delta\in(0,\gamma]$, we have $X,X'\in D(\mathbf{p}^*)$, where $p^*(w)=\widehat{p}(w)+\delta$, $p^*(w')=\widehat{p}(w')-\delta$, and $p^*(\widetilde{w})=\widehat{p}(\widetilde{w})$ for all $\widetilde{w}\neq w,w'$.\footnote{We know that $Y\notin D(\mathbf{\widehat{p}})$ for any $Y\subseteq W$ with $w\notin Y$ and $w'\in Y$, otherwise by Lemma \ref{lma}(i) $Y\in D(\mathbf{p})$, which contradicts that $w'$ is not demanded at $\mathbf{p}$. Thus, we have $\gamma>0$.} This is because for any $Y\subseteq W$ with $w\notin Y$ and $w'\in Y$,
\begin{align*}
v(Y)-c(Y,\mathbf{p}^{*})&=v(Y)-c(Y,\mathbf{\widehat{p}})+\delta\\
&\leq v(Y)-c(Y,\mathbf{\widehat{p}})+\gamma\leq v(X)-c(X,\mathbf{\widehat{p}})=v(X)-c(X,\mathbf{p}^{*})=v(X')-c(X',\mathbf{p}^{*}),
\end{align*}
where the last inequality is by the definition of $\gamma$, the second equality is due to $w,w'\notin X$, and the last equality is because $X,X'\in D(\mathbf{\widehat{p}})$.

Finally, we want to show that $w$ is not demanded at $\mathbf{p}^{**}$, where $p^{**}(w)=\widehat{p}(w)+\delta$ and $p^{**}(\widetilde{w})=\widehat{p}(\widetilde{w})$ for all $\widetilde{w}\neq w$. By Lemma \ref{lma}(i), $X\in D(\mathbf{p}^*)$ implies $X\in D(\mathbf{p}^{**})$. For any $Y'$ with $w\in Y'$,
\begin{align*}
v(Y')-c(Y',\mathbf{p}^{**})&=v(Y')-c(Y',\mathbf{\widehat{p}})-\delta\\
&\leq v(X)-c(X,\mathbf{\widehat{p}})-\delta=v(X)-c(X,\mathbf{p}^{**})-\delta<v(X)-c(X,\mathbf{p}^{**})
\end{align*}
The first inequality is because $X\in D(\mathbf{\widehat{p}})$, the second equality is because $w\notin X$. Hence, we have $Y'\notin D(\mathbf{p}^{**})$. Therefore, $w$ is not demanded at $\mathbf{p}^{**}$, but demanded at $\mathbf{p}^{*}$ ($X'\in D(\mathbf{p}^{*})$).

\bigskip


\begin{thebibliography}{99999999999999999999999999999999999999999}

\bibitem[Ashlagi, Braverman, and Hassidim (2014)]{ABH14}{\small Ashlagi, I., Braverman, M., Hassidim, A., 2014. Stability in large matching markets with complementarities. \emph{Operations Research}, 62, 713-732.}

\bibitem[Azevedo, Weyl, and White(2013)]{AWW13}{\small Azevedo, E.M., Weyl, E.G., White, A., 2013. Walrasian equilibrium in large, quasi-linear markets. \emph{Theoretical Economics}, 8(2), 281-290. }

\bibitem[Azevedo and Hatfield(2018)]{AH18}{\small Azevedo, E.M., Hatfield, J.M., 2018. Existence of equilibrium in large matching markets with complementarities. \emph{working paper}.}

\bibitem[Baldwin and Klemperer(2019)]{BK19}{\small Baldwin, E., Klemperer, P., 2019. Understanding Preferences: ``Demand Types'', and the Existence of Equilibrium with Indivisibilities. \emph{Econometrica}, 87, 867-932. }

\bibitem[Che, Kim, and Kojima(2019)]{CKK19}{\small Che, Y-K., Kim, J., Kojima, F., 2019. Stable matching in large economies. \textit{Econometrica}, 87(1), 65-110.}

\bibitem[Danilov, Koshevoy, and Murota(2001)]{DKM01}{\small Danilov, V., Koshevoy, G., Murota, K., 2001. Discrete convexity and equilibria in economics with indivisible goods and money. \emph{Mathematical Social Sciences}, 41, 251-273.}

\bibitem[Echenique and Oviedo(2006)]{EO06}{\small Echenique, F., Oviedo, J., 2006. A theory of stability in many-to-many matching. \textit{Theoretical Economics}, 1, 233-273.}

\bibitem[Echenique and Yenmez(2007)]{EY07}{\small Echenique, F., Yenmez, B., 2007. A Solution to Matching With Preferences Over Colleagues. \textit{Games and Economic Behavior}, 59, 46-71.}

\bibitem[Echenique and Yenmez(2015)]{EY15}{\small Echenique, F., Yenmez, B., 2015. How to control controlled school choice. \textit{American Economic Review}, 105(8), 2679-2694.}

\bibitem[Ehlers, Hafalir, Yenmez, and Yildirim(2014)]{EHYY14}{\small Ehlers, L., Hafalir, I.E., Yenmez, M.B., Yildirim, M.A., 2014. School choice with controlled choice constraints: Hard bounds vs. soft bounds. \textit{Journal of Economic Theory}, 153, 648-683.}

\bibitem[Gale and Shapley(1962)]{GS62}{\small Gale, D., Shapley, L.S., 1962. College admissions and the stability of marriage. \textit{American Mathematical Monthly}, 69, 9-15.}

\bibitem[Hafalir, Kojima, and Yenmez(2019)]{HKY19}{\small Hafalir, I. E., Kojima, F.,  Yenmez, M.B., 2019. Interdistrict school choice: A theory of student assignment. \emph{working paper}.}

\bibitem[Hatfield, Kominers, Nichifor, Ostrovsky, and Westkamp(2013)]{HKNOW13}{\small Hatfield, J. W., Kominers, S. D.,  Nichifor, A., Ostrovsky, M., Westkamp, A., 2013. Stability and competitive equilibrium in trading networks. \textit{Journal of Political Economy}, 121(5), 966-1005.}

\bibitem[Hatfield and Milgrom(2005)]{HM05}{\small Hatfield, J.W., Milgrom, P.R., 2005. Matching with contracts. \textit{American Economic Review}, 95, 913-935.}

\bibitem[Huang(2021)]{H21}{\small Huang, C., 2021. Stable matching: An integer programming approach. \emph{working paper}, https://arxiv.org/abs/2103.03418}

\bibitem[Kelso and Crawford(1982)]{KC82}{\small Kelso, A. S., Crawford, V.P., 1982. Job matching, coalition formation and gross substitutes. \textit{Econometrica}, 50, 1483-1504. }

\bibitem[Klaus and Klijn(2005)]{KK05}{\small Klaus, B., Klijn, F., 2005. Stable Matchings and Preferences of Couples. \textit{Journal of Economic Theory}, 121, 75-106.}

\bibitem[Kojima, Pathak, and Roth(2013)]{KPR13}{\small Kojima, F., Pathak, P.A., Roth, A.E., 2013. Matching with Couples: Stability and Incentives in Large Markets. \textit{Quarterly Journal of Economics}, 128, 1585-1632.}

\bibitem[Nguyen and Vohra(2018)]{NV18}{\small Nguyen, T., Vohra, R., 2018. Near-Feasible Stable Matchings with Couples. \textit{American Economic Review}, 108(11), 3154-3169.}

\bibitem[Nguyen and Vohra(2019)]{NV19}{\small Nguyen, T., Vohra, R., 2019. Stable Matching with Proportionality Constraints. \textit{Operations Research}, 67(6), 1503-1519.}

\bibitem[Ostrovsky(2008)]{O08}{\small Ostrovsky, M., 2008. Stability in supply chain networks. \textit{American Economic Review}, 98, 897-923. }

\bibitem[Pycia(2012)]{P12}{\small Pycia, M., 2012. Stability and preference alignment in matching and coalition formation. \textit{Econometrica}, 80(1), 323-362. }

\bibitem[Roth(1984)]{R84}{\small Roth, A.E., 1984. Stability and Polarization of Interests in Job Matching. \emph{Econometrica}, 52, 47-57.}

\bibitem[Roth(1985)]{R85}{\small Roth, A.E., 1985. Conflict and coincidence of interest in job matching: Some new results and open questions. \emph{Mathematics of Operations Research}, 10, 379-389.}

\bibitem[Roth and Sotomayor(1990)]{RS90}{\small Roth, A.E., Sotomayor, M., 1990. Two-sided Matching: A Study in Game-Theoretic Modelling and Analysis. Econometric Society Monographs No. 18, Cambridge University Press, Cambridge England.}

\bibitem[Sotomayor(1999)]{S99}{\small Sotomayor, M., 1999. Three remarks on the many-to-many stable matching problem. \textit{Mathematical Social Sciences}, 38, 55-70.}

\bibitem[Sun and Yang(2006)]{SY06}{\small Sun, N., Yang, Z., 2006. Equilibria and indivisibilities: Gross substitutes and complements. \textit{Econometrica}, 74, 1385-1402.}

\bibitem[Sun and Yang(2009)]{SY09}{\small Sun, N., Yang, Z., 2009. A double-track adjustment process for discrete markets with substitutes and complements. \textit{Econometrica}, 77, 993-952.}
\end{thebibliography}
\end{document}